\newtheorem{lem}{Lemma}[section]
\newtheorem{cor}[lem]{Corollary}
\newtheorem{propos}[lem]{Proposition}
\newtheorem{thm}[lem]{Theorem}
\newtheorem{defin}[lem]{Definition}
\newcommand{\comment}[1]{}
\newcommand{\abeta}{\gamma}
\newcommand{\abs}[1]{\left|#1\right|}
\newcommand{\boxL}{\Lambda_L^d}
\newcommand{\boxLp}{\Lambda_L^{d+1}}
\newcommand{\limL}{\underset{L\rightarrow\infty}\lim}
\newcommand{\path}{p}
\newcommand{\poly}{f}
\newcommand{\polyb}{g}
\newcommand{\Poly}{F}
\newcommand{\supp}{\textnormal{supp}}
\newcommand{\tildq}{\widetilde \polyb}
\newcommand{\N}{\mathbb{N}}
\newcommand{\Z}{\mathbb{Z}}
\newcommand{\Zd}{\mathbb Z^d}
\newcommand{\R}{\mathbb{R}}
\newcommand{\Ex}[1]{\mathbb E\left[#1\right]}
\newcommand{\Par}[1]{\left(#1\right)}
\newcommand{\SPar}[1]{\left[#1\right]}
\newcommand{\Set}[1]{\left\{#1\right\}}
\newcommand{\SetPred}[2]{\left\{#1\ \middle|\ #2\right\}}
\newcommand{\Tr}[1]{\textnormal{Tr}\left(#1\right)}
\newcommand{\Var}[1]{\textnormal{Var}\left(#1\right)}
\newcommand{\sVar}[1]{\textnormal{Var}(#1)}
\newcommand{\Cov}[1]{\textnormal{Cov}\left(#1\right)}
\numberwithin{equation}{section}
\begin{document}

\title{Spectral fluctuations for the multi-dimensional Anderson model}%\\Spectral fluctuations for the Anderson model over $\Zd$}
\author{Yoel Grinshpon, Moshe J.\ White \thanks{Institute of Mathematics, The Hebrew University of Jerusalem, Jerusalem, 91904, Israel.
Emails: yoel.grinshpon@mail.huji.ac.il, moshe.white@mail.huji.ac.il}}

\maketitle

\begin{abstract}
In this paper, we examine fluctuations of polynomial linear statistics for the Anderson model on $\Z^d$ for any potential with finite moments. We prove that if normalized by the square root of the size of the truncated operator, these fluctuations converge to a Gaussian limit. For a vast majority of potentials and polynomials, we show that the variance of the limiting distribution is strictly positive, and we classify in full the rare cases in which this does not happen.
\end{abstract}

%%%%%%%%%%%%%%%%%%%SECTION 1%%%%%%%%%%%%%%%%%%%%%%%%%
\section{Introduction}
The purpose of this paper is to study fluctuations of the eigenvalue counting measure for the Anderson model on $\Z^d$. We denote $\abs{n}=\sum_{v=1}^d n_v$ for any $n\in\Zd$, and write $n\sim m$ for $n,m\in\Zd$ if and only if $\abs{n-m}=1$. Define the operator $H: \ell^2(\Z^d)\longrightarrow \ell^2(\Z^d)$ by
$$(Hu)_n =\Par{\Delta u}_n +\Par{X u}_n = \sum_{m \sim n} u_m+X_n \cdot u_n$$
where $\Set{X_n}_{n\in\Zd}$ is an array of independent, identically distributed (iid) random variables with finite moments, satisfying $\Ex{X_n}=0$. We denote the distribution of each variable $X_n$ by $\textnormal{d}\rho$, which will henceforth be referred to as \emph{the underlying distribution}.

In this paper, we aim to study the fluctuations of the counting measure for the eigenvalues of finite volume approximations. Explicitly, we study fluctuations of polynomial linear statistics of finite volume truncations of $H$: for any $L\in\N$, denote
$$\Lambda_L=[-L,L]\cap\Z,$$
and let $H_L$ be the truncation of $H$ to the cube $\Lambda_L^d \subset \Zd$. That is,\\
 $H_L=1_{\Lambda_L^d} H 1_{\Lambda_L^d}$, where
$$\Par{1_{\Lambda_L^d}(u)}_n=\begin{cases}u_n & n\in\Lambda_L^d\\
0 & n\notin\Lambda_L^d\end{cases}.$$

%In order to state our theorem, we shall set a few notations and definitions.
We denote by $N\left(0,\sigma^2\right)$ the normal distribution on $\R$ with mean $0$ and variance $\sigma^2$, and denote by $\overset{d}{\longrightarrow}$ convergence in distribution. We agree that the zero random variable is also normal, by allowing $\sigma^2=0$ (in this case we say the distribution is \emph{degenerate}).\\

The \emph{empirical measure} of $H_L$ is the measure
$$\textnormal{d}\nu_L=\frac{1}{(2L+1)^{d/2}}\sum_{i=0}^{\abs{\Lambda_L^d}} \delta_{\lambda_i^{(\Lambda_L^d)}}$$
where $\left \{\lambda_{1}^{(\Lambda_L^d)},\lambda_2^{(\Lambda_L^d)},\ldots,\lambda_{\abs{\Lambda_L^d}}^{(\Lambda_L^d)} \right \}=\sigma\Par{H_L}$ are the eigenvalues of $H_L$ (counting multiplicity), and $\delta_\lambda$ is the Dirac measure at $\lambda$. When the empirical measure has a limit as $L \rightarrow \infty$, this limit is known as the density of states of $H$. In our case, it is known that the random measure $\textnormal{d}\nu_L$ converges weakly almost surely to a deterministic measure $\textnormal{d}\nu$ (see e.g.\ \cite{AW} and references within).

We want to focus on asymptotics of the fluctuations of $\textnormal{d}\nu_L$. A natural way to study this is using \emph{linear statistics} for polynomials, i.e., random variables of the form $\int\poly\textnormal{d}\nu_L=\frac{1}{\Par{2L+1}^{d/2}}\Tr{\poly(H_L)}$ for some polynomial $\poly(x)\in\R[x]$.

Fluctuations of the truncated eigenvalues $\lambda_i^{\Par{\Lambda_L^d}}$ are assumed to be associated to continuity properties of the spectral measures. There are several results indicating this is indeed true. Minami \cite{Minami} studied the microscopic scale of the eigenvalues of the Anderson model in $\Z^d$, after Molchanov \cite{M} did the same for the continuous case in one dimension.
Minami proved that under certain conditions that ensure localization with exponentially decaying eigenfunctions, the eigenvalues of the Anderson model have Poisson behavior on the microscopic scale. For $d=1$, it is well known that localization holds for any ergodic non-deterministic potential \cite{AW}. However, for $d\ge 3$ and for sufficiently low energies, it is conjectured that $H$ has extended states, i.e., the spectrum of $H$ has an absolutely continuous component.

We now state our main theorem:
\begin{thm} \label{main_thm}
Let $\poly(x) \in\R[x]$ be a non-constant polynomial. Then
$$\frac{\Tr{\poly\Par{H_L}}-\Ex{\Tr{\poly\Par{H_L}}}}{(2L+1)^{d/2}} \overset{d}{\longrightarrow} N(0,\sigma(\poly)^2)$$
as $L\rightarrow\infty$, where:
\begin{enumerate}
\item If the underlying distribution $(\textnormal{d}\rho)$ is supported by more than three points, then $\sigma(\poly)^2>0$.
%\item If the underlying distribution is supported by exactly two points, there exist polynomials $q_2,q_3,q_5\in\R[X]$, of degrees $2,3,5$ respectively, such that $\sigma(p)^2=0$ if and only if $p$ is a linear combination of $q_2,q_3,q_5$ and a constant.
\item If the underlying distribution is supported by exactly two points, there exist polynomials $\polyb_2,\polyb_3,\polyb_5\in\R[x]$, of degrees $2,3,5$ respectively, such that $\sigma(\poly)^2=0$ if and only if $\poly\in\textnormal{span}_\R\Set{\polyb_5,\polyb_3,\polyb_2,1}$.
\item If the underlying distribution is supported by exactly three points, there exists a polynomial $\tildq_3\in\R[x]$ of degree $3$, such that $\sigma(\poly)^2=0$ if and only if $\poly\in\textnormal{span}_\R\Set{\tildq_3,1}$.
\end{enumerate}
\end{thm}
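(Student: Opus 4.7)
The plan is to expand the centered trace as a sum of weakly-dependent local functionals indexed by sites of $\boxL$, apply a cumulant-based CLT, and separately analyze when the resulting variance can vanish. Expanding $\Tr{H_L^k}=\sum_w \prod_i (H_L)_{w_i w_{i+1}}$ as a sum over closed walks $w$ of length $k$, each factor is either $X_{w_i}$ (``lazy step'') or $1$ (``edge step''), so each walk contributes a monomial $\prod_v X_v^{r_v(w)}$ where $r_v(w)$ is the number of lazy steps at $v$. Grouping walks by their translation class and fixing a base point, one obtains
\begin{equation*}
\Tr{\poly(H_L)}-\Ex{\Tr{\poly(H_L)}}=\sum_{v\in\boxL} Y_v + R_L,
\end{equation*}
where $Y_v$ is centered, measurable with respect to $\Set{X_n : \abs{n-v}\le \deg(\poly)}$, stationary in $v$ away from the boundary, and $R_L$ is a boundary remainder with $\Ex{R_L^2}=O(L^{d-1})$.

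For the Gaussian convergence I would apply the method of cumulants to $\sum_v Y_v$. The joint cumulant $\kappa(Y_{v_1},\ldots,Y_{v_k})$ vanishes unless the supports of the $Y_{v_i}$ form a connected set, forcing $\textnormal{diam}\Set{v_1,\ldots,v_k}=O_\poly(1)$. The number of such $k$-tuples in $\boxL$ is $O(L^d)$, and the moments $\Ex{\abs{Y_v}^k}$ are uniformly bounded by the finite-moment hypothesis on $\textnormal{d}\rho$ together with the bounded degree of $Y_v$ as a polynomial in finitely many $X_n$'s. Hence the $k$-th cumulant of $\sum_v Y_v$ is $O(L^d)$, so after normalization by $(2L+1)^{d/2}$ all cumulants of order $\ge 3$ vanish in the limit, while the variance cumulant converges to $\sigma(\poly)^2=\sum_{w\in\Zd} \Cov{Y_0,Y_w}$, which is finite by the finite range of $Y_v$.

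For the non-degeneracy claims I would use the ANOVA (Hoeffding) decomposition of $\Tr{\poly(H_L)}-\Ex{\Tr{\poly(H_L)}}$ with respect to the iid family $\Set{X_n}$, which expresses it as an orthogonal sum of components indexed by subsets of sites on which the contribution is ``fully centered''. Since $\sigma(\poly)^2$ then decomposes into a sum of squared norms of these components, vanishing of $\sigma(\poly)^2$ forces every component to vanish. For $\textnormal{d}\rho$ supported on at least $4$ points the centered powers $X_n-\Ex{X_n},\ldots,X_n^3-\Ex{X_n^3}$ are linearly independent, and I would pick a leading monomial in $\poly$ whose associated local component cannot be cancelled. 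For $\textnormal{d}\rho$ supported on exactly $2$ or $3$ points, the minimal polynomial of $X_n$ has degree $2$ or $3$, creating algebraic identities which generate a nontrivial kernel; the polynomials $\polyb_2,\polyb_3,\polyb_5$ (resp.\ $\tildq_3$) would be identified as explicit low-degree witnesses to complete cancellation, and completeness of the list established by a finite-dimensional linear algebra computation encoding the simultaneous vanishing of each Hoeffding component.

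The main obstacle is the non-degeneracy analysis in parts (2) and (3). The CLT and the variance formula are fairly standard applications of the local structure of $H_L$, but pinning down the exceptional subspaces requires a delicate interplay between the combinatorics of short closed walks on $\Zd$ and the algebraic relations among powers of a random variable with small support. In particular, verifying that no polynomial of degree higher than $5$ can be added in the $2$-point case requires exhibiting, for every candidate $\poly$ outside the proposed span, a pair of walk shapes whose joint covariance contribution to $\sigma(\poly)^2$ cannot be killed by Bernoulli identities, which is the core combinatorial content of the theorem.
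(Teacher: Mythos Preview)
Your CLT argument via cumulants is a legitimate alternative to the paper's route, which instead packages the local functionals as an $m$-dependent $\Zd$-indexed array and proves a CLT by induction on $d$ (Hoeffding--Robbins for $d=1$, then Neumann's triangular-array CLT for the inductive step). Both methods exploit the same finite-range structure; yours is arguably more direct, while the paper's has the advantage of needing only third moments at each step rather than all moments.

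The non-degeneracy analysis is where your proposal diverges more substantially and where there is a real gap. Your plan for part~(1) rests on the linear independence of the centered single-site powers $X_n,\ldots,X_n^3$ when $\abs{\supp(\textnormal{d}\rho)}\ge 4$, together with ``picking a leading monomial in $\poly$''. But for $\deg(\poly)=m$ large and $\abs{\supp(\textnormal{d}\rho)}$ small (say $4$), every single-site contribution collapses modulo the minimal polynomial of $X_n$ to degree~$\le 3$, and there is no reason the resulting singleton Hoeffding component cannot vanish. The paper's key structural observation, which your sketch misses, is that degrees $m\in\{1,4\}\cup\{m\ge 6\}$ are handled \emph{uniformly in the underlying distribution}: one exhibits a multi-index $\gamma$ supported on two or three distinct sites, each with multiplicity exactly~$1$ (e.g.\ $\gamma=\delta+\delta^{(m/2-1)e_1}$ for even $m\ge 4$), so that $X^\gamma$ is a product of distinct centered variables. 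Such a $\gamma$ satisfies $\path^m(\gamma)>0$, $\path^k(\gamma)=0$ for $k<m$, and $X^\gamma$ is uncorrelated with every other $X^{\beta^j}$ appearing in $\Tr{\poly(H_L)}$; this forces $\sigma(\poly)^2>0$ with no hypothesis on $\supp(\textnormal{d}\rho)$ beyond $\Var{X_n}>0$. Only after this reduction does one face the genuinely support-dependent cases $m\in\{2,3,5\}$, which the paper then handles by an explicit tabulation of all $\path^k(\beta)$ for $k\le 5$ and a quotient-ring argument in $\R[Z]/(q(Z))$ (or its multivariate analogue) essentially equivalent to your proposed Hoeffding-component analysis. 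Your final paragraph suggests you expect the hard work to be checking degrees $>5$ case by case in the two-point setting; in fact that case is vacuous once the multi-site $\gamma$ trick is in hand, and the real computation is confined entirely to degrees $2,3,5$.
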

The polynomials $\polyb_2,\polyb_3,\polyb_5, \tildq_3$ depend on $\textnormal{d}\rho$ as well as on the dimension $d$, and are given explicitly in Propositions \ref{deg_23_var} and \ref{deg_5_var} below.

The study of fluctuations of finite truncations of the Anderson Model has received a considerable amount of attention, although most results focus on the one-dimensional case. Reznikova \cite{Rez} proved a CLT for the eigenvalue counting function of the truncated Anderson model in 1-dimension. Kirsch and Pastur \cite{KP} proved a CLT for the trace of truncations of the Green function of the Anderson model in one dimension. Recently, Pastur and Shcherbina \cite{PS} extended this result to other functions of $H$.% It is worth mentioning, that in the case KP study, the boundary $\partial \Lambda$ plays a role in the asymptotics of $Tr(f(H))$, which is not the case for polynomials. This is due to the $m$-dependence of the boundary in our case (TERRIBLE PHRASING).

In our proof we shall compute the trace of powers of $H_L$ by counting paths on the associated lattice. Path counting and weighted path counting is commonly used in the study of random Schrodinger operators and in the study of random matrices (see, e.g., \cite{AW} and \cite{RMT} and references therein).
%A property which shall play an important role in our proof is the connection between the trace of $H_L$ and path counting. Exploiting these properties was used already a while ago \cite{Flajolet, Viennot}, as well as in recent work by Popescu, as well as Breuer et al. \cite{BGW, Popescu}. Moreover, this is a well known method in the research of random matrices \cite{RMT}.

This paper can be viewed as a second paper in a series, continuing the work of Breuer with the authors \cite{BGW}. In the previous paper, path counting was used to prove a central limit theorem (CLT) for a decaying model over $\N$. In this paper, the methods have been modified to apply to the Anderson model over $\Zd$ for general $d\in\N$. Each of the papers is self contained, but there are many parallels in the overall structure of the paper and propositions.

The rest of the paper is organized as follows: In Section 2 we set up our definitions, and prove that `typical' diagonal elements in the matrix representation of $H_L^k$ have a combinatorial description (using path counting). In Sections 3 and 4 we prove our main theorem - in Section 3 we show that fluctuations of $\Tr{f\Par{H_L}}$ converge to a normally distributed random variable, and in Section 4 we classify all cases in which the limit distribution is non-degenerate. The final proof of Theorem \ref{main_thm} appears at the end of Section 4. We conclude with Section 5 (which is independent from the rest of the paper) in which we state and prove a CLT for $m$-dependent random variables indexed by $\Zd$, which implies the CLT we use in Section 3.

\textbf{Acknowledgments.} We are deeply grateful to Jonathan Breuer for his generous guidance and support throughout this research.

Research by YG was supported by the Israel Science Foundation (Grant No. 399/16). Research by MW was supported in part by the Israel Science Foundation (Grant No. 1612/17) and in part by the ERC Advanced Grant (Grant No. 834735).

%%%%%%%%%%%%%%%%%%%SECTION 2%%%%%%%%%%%%%%%%%%%%%%%%%
\section{Definitions and preliminaries}
Fix $d\in\N$. As stated in the Introduction, we explore the random operator $H:\ell^2\Par{\Z^d}\to \ell^2\Par{\Z^d}$. %given by
%$$(Hu)_n=\sum_{m \sim n} u_m+X_n \cdot u_n $$
%where $\SetPred{X_n}{n\in\Zd}$ are independently distributed by some fixed underlying distribution, and satisfy $\Ex{X_n}=0$.
It is useful to decompose $H$ as
\begin{equation} \label{operator_decomp}
H=V+\sum_{v=1}^d U_v + \sum_{v=1}^d D_v,
\end{equation}
where $V$ is the random potential operator, and each $U_v$ (respectively $D_v$) is the operator shifting forward (respectively backward) in direction $v$. In other words, let $e_1,e_2,\ldots,e_d$ denote the standard generators of $\Zd$ as a free abelian group. Then for every $n\in\Zd$ and $u\in\ell^2(\Zd)$ we have $(Vu)_n=X_n u_n$, and for every $1\leq v\leq d$ we have $(U_v u)_n=u_{n+e_v}$ and $(D_v u)_n=u_{n-e_v}$. A corresponding decomposition is also given for every finite volume truncation, $H_L$.

%For any $L\in\N$, let $H_L=1_{\Lambda_L} H 1_{\Lambda_L}$, where $1_{\Lambda_L}$ is given by $\Par{1_{\Lambda_L}u}_n=u_n\cdot 1_{\Par{n\in\Lambda_L}}$.

%For any $L\in\N$, let $H_L$ denote the restriction of $H$ to $\boxL$. Our goal is to explore the asymptotic behavior (as $L\rightarrow\infty$) of $\Tr{P\Par{H_L}}$, where $P\in\R[X]$ is a polynomial.\\
Our theorem deals with the asymptotic behavior (as $L\rightarrow\infty$) of\\
 $\Tr{\poly\Par{H_L}}$, for polynomials $\poly\in\R[x]$. We consider $\Tr{\poly\Par{H_L}}$ as a polynomial in the variables $\SetPred{X_n}{n\in\Zd}$. To slightly ease notation, we denote our variables by a lowercase Latin letter (such as $x,z$) when referring to a single variable in a polynomial ring, and by uppercase letters (such as $X_n, Z_n, Z$) when referring to variables in polynomial rings which can also be understood as random variables with some distribution.

To work with such multivariate monomials, we introduce the following definitions:

\begin{defin} \label{index_def}
A finitely supported function $\beta:\Zd\to\N\cup\{0\}$ will be called a multi-index. Let $\beta_n$ denote the value $\beta(n)$ for every $n\in\Zd$. Let $X^\beta$ denote the monomial $\prod_{n\in\Zd}X_n^{\beta_n}$.% The degree of this monomial is $\sum_{n\in\Zd} \beta_n$, which we also denote by $\abs\beta$.\\
\end{defin}

Fix a multi-index $\delta$, by
$$\delta_n=
\begin{cases}
1 &n=0\\
0 & n \neq 0
\end{cases}
$$
\begin{defin} \label{shifting_def}
For every multi-index $\beta$ and $i\in\Zd$, define $\beta^i$ $($$\beta$ shifted by $i$$)$, by $\beta^i_n=\beta_{n-i}$ for every $n\in\Zd$.
\end{defin}
Note that using these definitions, for $n,i\in\Zd$, $\delta^i_n$ is $1$ if $n=i$ and $0$ otherwise. Additionally, $\beta=\sum_{i\in\Zd} \beta_i \delta^i$ for every multi-index $\beta$ (this is a finite sum as $\beta$ is finitely supported).

Next, we fix $k\in\N$ and begin exploring the asymptotic behavior (as $L\rightarrow\infty$) of $\Tr{H_L^k}$. As we shall see, the coefficient of any monomial $X^\beta$ in $\Tr{H_L^k}$ is fixed for sufficiently large $L$, and has a concrete combinatorial description. Furthermore, these coefficients are invariant under translations of the monomials in $\Zd$. The precise statement is given in Proposition \ref{coef_prop} below, which requires some more definitions.

\begin{defin} \label{string_def}
Let $\mathcal S=\Set{V,U_1,U_2,\ldots,U_d,D_1,D_2,\ldots,D_d}$ be considered as formal symbols.\\
Then $\mathcal S^k$ denotes the set of all ordered $k$-tuples with elements from $\mathcal S$, or all strings of length $k$ from the alphabet $\mathcal S$.
%Let $\mathcal S^k$ denote the set of all strings of length $k$, composed from the letters $V,U_1,\ldots,U_d,D_1,\ldots,D_d$.
\end{defin}
%In other words, $\mathcal S^k$ consists of all $k$-tuples $s=\Par{s_1,s_2,\ldots,s_k}$, where every $s_j$ is either $V$, $U_v$ or $D_v$, for some $v=1,2,\ldots,d$.
\begin{defin} \label{path_def}
For every $s\in\mathcal S^k$, we define a finite sequence of points, $y_0(s), y_1(s),\ldots,y_k(s)\in\Zd$ as follows:
\begin{itemize}
\item $y_0(s)=(0,0,\ldots,0)$,
\item $y_j(s)=
\begin{cases}
y_{j-1}(s)+e_v & s_j=U_v \\
y_{j-1}(s)-e_v & s_j=D_v \\
y_{j-1}(s)& s_j=V.
\end{cases}$
\end{itemize}
We say that $s$ is \emph{balanced}, if $y_k(s)=y_0(s)$.
\end{defin}
Note that $s\in\mathcal S^k$ is balanced iff for every $v=1,2,\ldots,d$, the symbols $U_v$ and $D_v$ appear in $s$ the same number of times.
%$$\#\SetPred{j=1,2,\ldots,d}{s_j=U_v}=\#\SetPred{j=1,2,\ldots,d}{s_j=D_v}$$

\begin{defin} \label{corresponding_index_def}
For every $s\in\mathcal S^k$, define a multi-index $\varphi(s)$ by
$$\varphi(s)_n=\#\SetPred{1\leq j\leq k}{y_j(s)=y_{j-1}(s)=n},$$
for every $n\in\Zd$.
\end{defin}

\begin{defin} \label{path_counting_def}
For every multi-index $\beta$, let $\path^k(\beta)$ be the number of balanced strings $s\in\mathcal S^k$ satisfying $\varphi(s)^i=\beta$, for some $i\in\Zd$.
\end{defin}
Note that for every $s\in\mathcal S^k$ and multi-index $\beta$, there is at most one $i\in\Z^d$ for which $\varphi(s)^i=\beta$.

\begin{propos} \label{coef_prop}
For every non-zero multi-index $\beta$, and $k,L\in\N$, let $a^k_L(\beta)$ denote the coefficient of $X^\beta$ in the polynomial $\Tr{H_L^k}$. Then:
\begin{enumerate}
\item $0\leq a^k_L(\beta)\leq \path^k(\beta)$,
\item If $\beta_n>0$ for some $n\in\Lambda_{L-k}^d$, we have $a^k_L(\beta)=\path^k(\beta)$,
\item If $\beta_n>0$ for some $n\notin \boxL$, we have $a^k_L(\beta)=0$.
\end{enumerate}
\end{propos}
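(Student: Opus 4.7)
The plan is to expand $\Tr{H_L^k}$ as a sum over paths on $\boxL$, reorganize by path type (a string in $\mathcal S^k$), and read off the coefficient of $X^\beta$ directly. I would begin by noting that, using the decomposition \eqref{operator_decomp} restricted to $\boxL$ and matrix multiplication,
$$
(H_L^k)_{n_0,n_0} \;=\; \sum_{\substack{n_0=m_0,\,m_1,\dots,m_k=n_0 \\ m_j\in\boxL}} \prod_{j=1}^k (H_L)_{m_{j-1},m_j},
$$
a sum over $k$-step paths in $\boxL$ from $n_0$ back to $n_0$ in which each step is either a self-loop (contributing $X_{m_{j-1}}$ from $V_L$) or a $\pm e_v$ move (contributing $1$ from $U_{v,L}$ or $D_{v,L}$). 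Encoding the path by a string $s\in\mathcal S^k$, once $n_0$ is fixed the path is recovered as $m_j=n_0+y_j(s)$.

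The next step is to identify the coefficient of $X^\beta$. Returning to $n_0$ at step $k$ forces $y_k(s)=0$ (i.e.\ $s$ balanced); the monomial contributed by $(n_0,s)$ is $\prod_{j:\,s_j=V} X_{n_0+y_{j-1}(s)}$, so the exponent of $X_m$ in it is $\varphi(s)_{m-n_0}=\varphi(s)^{n_0}_m$, by Definitions \ref{corresponding_index_def}--\ref{shifting_def}. Summing over $n_0\in\boxL$ and extracting the coefficient of $X^\beta$,
$$
a^k_L(\beta) \;=\; \#\Set{(n_0,s)\ \middle|\ s\in\mathcal S^k \text{ balanced},\ \varphi(s)^{n_0}=\beta,\ n_0+y_j(s)\in\boxL\ \text{for all}\ 0\le j\le k},
$$
which is manifestly nonnegative. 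Since $\beta\neq 0$, the remark after Definition \ref{path_counting_def} gives that for each $s$ at most one $i\in\Zd$ satisfies $\varphi(s)^i=\beta$; the projection $(n_0,s)\mapsto s$ is then injective into the balanced strings counted by $\path^k(\beta)$, giving part (1). Part (3) follows at once: if $\beta_n>0$ with $n\notin\boxL$, any contributing path would have to self-loop at $n$ and hence leave $\boxL$, a contradiction.

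The one substantive step is part (2), and I expect it to rest on a single geometric observation. Write $\|m\|:=\max_{1\le v\le d}|m_v|$ for the $\ell^\infty$ norm on $\Zd$, so that $m\in\boxL$ iff $\|m\|\le L$. The visited set $Y(s):=\Set{y_0(s),\dots,y_k(s)}$ has $\|\cdot\|$-diameter at most $k$, because consecutive points differ by at most one standard basis vector. Given a balanced $s$ with $\varphi(s)^i=\beta$ and the site $n\in\Lambda_{L-k}^d$ from the hypothesis (so $\beta_n>0$), one has $n-i\in\supp\varphi(s)\subseteq Y(s)$, and for every $0\le j\le k$,
$$
\|i+y_j(s)\|\ \le\ \|n\|+\|y_j(s)-(n-i)\|\ \le\ (L-k)+k\ =\ L.
$$
So the path starting at $i$ and following $s$ stays inside $\boxL$, and in particular $i\in\boxL$ (the $j=0$ case). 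Every string counted by $\path^k(\beta)$ therefore supplies a valid pair $(i,s)$ contributing to $a^k_L(\beta)$, giving equality. The main obstacle in writing this out is really just keeping the shift conventions of Definitions \ref{corresponding_index_def}--\ref{path_counting_def} aligned with the matrix expansion; the geometric content is just the diameter bound, which covers parts (2) and (3) uniformly.
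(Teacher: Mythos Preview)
Your proof is correct and follows essentially the same approach as the paper: expand $H_L^k$ via the decomposition \eqref{operator_decomp}, identify diagonal contributions with pairs $(n_0,s)$ of a base point and a balanced string, and read off $a^k_L(\beta)$ as a count of such pairs with the path confined to $\boxL$. Your argument for part (2) is marginally cleaner than the paper's, which uses the bound $\abs{y_j(s)}\le k/2$ for balanced $s$ (applied twice) rather than your direct $\ell^\infty$-diameter bound on $Y(s)$, but the content is the same.
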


\begin{proof}
Use (\ref{operator_decomp}) to expand $H^k$. This gives us a bijection between operators in the expansion of $H^k$ and strings in $\mathcal S^k$. Furthermore, let $M_L$ be any matrix in the expansion of $H_L^k$ corresponding to a string $s\in\mathcal S^k$. It is straightforward to verify that if $s\in\mathcal S^k$ is balanced, and $i\in\boxL$, and $y^j(s)+i\in \boxL$ for every $j=1,2,\ldots,k$, we have $\Par{M_L}_{i,i}=X^{\varphi(s)^i}$. Otherwise, we have $\Par{M_L}_{i,i}=0$.\\
Therefore, fixing a multi-index $\beta$, the coefficient $a^k_L(\beta)$ equals the number of strings $s\in\mathcal S^k$, for which $\varphi(s)^i=\beta$ and the additional conditions $y_j(s)+i\in\boxL$ are fulfilled (we simply compute the trace as the sum over all diagonal entries from all matrices in the expansion). The number of such strings is at least $0$ and at most $\path^k(\beta)$ (which is the number of such strings without the additional conditions), proving (1).\\
Note that for any balanced $s\in\mathcal S^k$, we have $\abs{y_j(s)}\leq \frac k2$ for every $j=0,1,\ldots,k$. We deduce that whenever $\beta$ takes a non-vero value in $\Lambda_{L-k}^d$, if $\beta=\varphi(s)^i$ we must have $y_j(s)+i\in\Lambda_{L-k}^d$ for \emph{some} $j$, therefore $y_j(s)+i\in\boxL$ for \emph{every} $j=0,1,\ldots,k$. For such $\beta$, any $i\in\Zd$ and $s\in\mathcal S^k$ satisfying $\varphi(s)^i=\beta$ automatically fulfill the additional conditions, proving (2).\\
Similarly, if $\beta$ obtains a non-zero value outside of $\boxL$, satisfying $\varphi(s)^i=\beta$ guarantees that $y_j(s)+i\notin\boxL$ for some $j$, therefore $X^\beta$ doesn't appear anywhere on the diagonal of $M_L$, proving (3).
\end{proof}

%This Proposition will allows us to approximate $\Tr{H_L^k}$, by replacing all non-zero coefficients of $X^\beta$ with $\path^k(\beta)$.\\
%\begin{rem} \label{shift_rem}
Note that from definition \ref{path_counting_def}, it is clear that $\path^k\Par{\beta^i}=\path^k(\beta)$, for any multi-index $\beta$, any $i\in\Zd$, and any $k\in\N$.
%\end{rem}
Therefore, when considering the integers $\path^k(\beta)$ which appear as coefficients in the polynomials $\Tr{H_L^k}$, we may restrict our attention to a set of non-zero multi-indices which contains some shifting of every multi-index exactly once.
%We may therefore restrict our attention to a set of non-zero multi-indices which contains some shifting of every multi-index exactly once.
We denote this set by $B$:
\begin{defin} \label{index_set_def}
Two multi-indices $\beta$ and $\gamma$ are said to be equivalent if $\gamma=\beta^i$ for some $i\in\Zd$. From each equivalence class other than zero, choose a unique representative $\beta$, satisfying $\beta_0>0$ $($one way to make such choices, is to require the lexicographic minimum of the support of $\beta$ to be $0$$)$. Let $B$ be the set of all chosen representatives.
\end{defin}
In other words, $B$ is any set of multi-indices with the properties:
\begin{enumerate}
\item For any non-zero multi-index $\gamma$, we have $\gamma^i\in B$ for a unique $i\in\Zd$.
\item $\beta_0>0$ for every $\beta\in B$.
\end{enumerate}

%%%%%%%%%%%%%%%%%%%SECTION 3%%%%%%%%%%%%%%%%%%%%%%%%%
\section{A central limit theorem for polynomial linear statistics}
In this section, we prove that for every polynomial $\poly(x) \in \R[x]$,
$$\frac{\Tr{\poly(H_L)}-\Ex{\Tr{\poly(H_L)}}}{\Par{2L+1}^{d/2}}$$
converges in distribution (as $L\rightarrow\infty$) to a normal distribution with variance $\sigma(\poly)^2\in[0,\infty)$ (see Proposition \ref{poly_CLT} below). We start by proving this CLT in the case where $\poly(x)=x^k$ is a monomial, which is easier to prove for an approximated version of the random variable $\Tr{H_L^k}$:
\begin{defin} \label{approximating_def}
For every $k,L\in\N$, let
\begin{equation} \label{approximating_def_eq}
T_L^k=\sum_{\beta\in B}\path^k(\beta) \sum_{i\in\boxL} X^{\beta^i},
\end{equation}
which we consider both as a random variable, and as a polynomial in the variables $\SetPred{X_n}{n\in\Zd}$.
\end{defin}
Note that the above sum is finite, since $\path^k(\beta)=0$ for all but finitely many $\beta\in B$.\\
%\section{Asymptotics of trace approximations}
%We start this section with a central limit theorem which applies for the random variables $T_L^k$ defined in \ref{approximating_def} above. Although our initial random variables $\SetPred{X_n}{n\in\Zd}$ were i.i.d., for a fixed multi-index $\beta$ the random variables $\SetPred{X^{\beta^i}}{i\in\Zd}$ are usually no longer independent. %Therefore, we shall need the definition of $m$-dependent random variables. For $d=1$, we have the well known definition for $m$-dependence:
We start be proving that $T_L^k$ can indeed approximate $\Tr{H_L^k}$, in the following sense:
\begin{propos} \label{approximating_prop}
For every $k\in\N$, the random variables
$$\frac{\Tr{H_L^k}-\Ex{\Tr{H_L^k}}}{\Par{2L+1}^{d/2}} - \frac{T_L^k-\Ex{T_L^k}}{\Par{2L+1}^{d/2}}$$
converge in probability $($as $L\to\infty$$)$ to $0$.
\end{propos}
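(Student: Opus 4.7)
The plan is to prove the stronger statement that the normalized centered difference converges to $0$ in $L^2$, which implies convergence in probability. Setting $D_L^k := \Tr{H_L^k} - T_L^k$, I will show $\Var{D_L^k} = O\Par{L^{d-1}}$; division by $(2L+1)^d$ then gives $O(L^{-1}) \to 0$, and Chebyshev's inequality concludes.

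The first task is coefficient matching. For any non-zero multi-index $\gamma$ there is a unique $\beta \in B$ and $i \in \Z^d$ with $\gamma = \beta^i$; call $i$ the \emph{anchor} of $\gamma$. Since $\beta_0 > 0$, we have $\gamma_i > 0$, so the anchor is itself a support point of $\gamma$. By Definition \ref{approximating_def}, the coefficient of $X^\gamma$ in $T_L^k$ equals $\path^k(\beta)$ when $i \in \boxL$ and $0$ otherwise. By Proposition \ref{coef_prop}, the coefficient $a_L^k(\gamma)$ of $X^\gamma$ in $\Tr{H_L^k}$ equals $\path^k(\beta)$ when $i \in \Lambda_{L-k}^d$ (apply part (2) with $n=i$) and equals $0$ when $i \notin \boxL$ (apply part (3) with $n=i$). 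The coefficients therefore agree except possibly for anchors in the boundary shell $S_L := \boxL \setminus \Lambda_{L-k}^d$.

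Only finitely many $\beta \in B$ have $\path^k(\beta) > 0$, and each such $\beta$ has $\supp(\beta)$ contained in a bounded neighborhood of the origin, since every path $y_0(s),\ldots,y_k(s)$ arising from a balanced string stays within $\ell^1$-distance $k/2$ of $0$. Grouping by anchor therefore yields
$$D_L^k = \sum_{i \in S_L} Y_i, \qquad Y_i := \sum_{\beta \in B}\Par{a_L^k\Par{\beta^i} - \path^k(\beta)} X^{\beta^i},$$
where each $Y_i$ depends only on the variables $\SetPred{X_n}{n \in i + \Lambda_k^d}$. Because $\textnormal{d}\rho$ has all finite moments and the coefficients $a_L^k\Par{\beta^i} - \path^k(\beta)$ are bounded by $\path^k(\beta)$, $\Var{Y_i}$ is uniformly bounded in $i$ and $L$. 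Moreover, $Y_i$ and $Y_j$ are independent whenever $\Par{i + \Lambda_k^d} \cap \Par{j + \Lambda_k^d} = \emptyset$, which excludes only a bounded number of $j$'s for each fixed $i$. Hence
$$\Var{D_L^k} = \sum_{i,j \in S_L} \Cov{Y_i,Y_j} \leq \abs{S_L}\cdot (4k+1)^d \cdot \sup_{i,L}\Var{Y_i} = O\Par{L^{d-1}},$$
using $\abs{S_L} = (2L+1)^d - (2L-2k+1)^d = O\Par{L^{d-1}}$.

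The main obstacle is the coefficient bookkeeping in the first step: one must verify simultaneously that the interior ($i \in \Lambda_{L-k}^d$) contributions cancel exactly and that $T_L^k$ does not spuriously assign coefficients to monomials whose anchors lie outside $\boxL$. Both facts hinge on the property $\beta_0 > 0$ built into the definition of $B$, which forces the anchor of $\gamma$ to be one of its support points and thus allows Proposition \ref{coef_prop} parts (2)--(3) to be invoked at the specific coordinate $n = i$. Once this localization is in hand, the variance estimate is a routine application of the $2k$-dependent structure of a sum of local functionals of the i.i.d.\ random field $\Set{X_n}_{n \in \Zd}$ over a shell of cardinality $O\Par{L^{d-1}}$.
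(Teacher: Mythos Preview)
Your proof is correct and follows essentially the same approach as the paper: both show $\Var{\Tr{H_L^k}-T_L^k}=O(L^{d-1})$ by using Proposition~\ref{coef_prop} to localize the nonzero coefficient discrepancies to the boundary shell $\boxL\setminus\Lambda_{L-k}^d$, then bound the variance via the finite-range dependence of the monomials $X^{\beta^i}$. The only cosmetic difference is that you package the shell contributions into local variables $Y_i$ before estimating covariances, whereas the paper expands the variance directly over quadruples $(\beta,i,\gamma,j)$; the counting and the uniform moment bounds are the same.
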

\begin{proof}
It is sufficient to show that $\Var{T_L^k-\Tr{H_L^k}}=o\Par{L^d}$.
%$$\limL \frac 1{\Par{2L+1}^d}\Var{T_L^k-\Tr{H_L^k}}=0$$
From Proposition \ref{coef_prop} and (\ref{approximating_def}), we have
$$T_L^k-\Tr{H_L^k}=\sum_{\beta\in B} \sum_{i\in\boxL} \Par{\path^k(\beta)-a^k_L\Par{\beta^i}}X^{\beta^i},$$
where $\path^k(\beta)-a^k_L\Par{\beta^i}=0$ whenever $i\in\Lambda_{L-k}^d$. Therefore, the number of non-zero terms in the above sum is at most
%$$\abs{B_k}\Par{\Par{2L+1}^d-\Par{2(L-k)+1}^d}=O\Par{L^{d-1}},$$
$$\abs{B_k}\Par{\abs{\Lambda_L^d}-\abs{\Lambda_{L-k}^d}}=O\Par{L^{d-1}},$$
where $B_k=\SetPred{\beta\in B}{\path^k(\beta)\neq 0}$ is finite. Next, consider the sum
\begin{equation} \label{approximation_var_eq}
	\begin{split}
&\Var{T_L^k-\Tr{H_L^k}}=\\
&\sum_{\beta,\gamma\in B_k} \sum_{i,j\in\boxL} \Par{\path^k(\beta)-a^k_L\Par{\beta^i}}\Par{\path^k(\gamma)-a^k_L\Par{\gamma^j}}\Cov{X^{\beta^i} , X^{\gamma^j}}.
	\end{split}
\end{equation}
Fixing $\beta,\gamma\in B_k$ and $i\in\Lambda_L^d\setminus\Lambda_{L-k}^d$, we see that whenever $j-i\notin\Lambda_k^d$, the supports of $\beta^i$ and $\gamma^j$ are disjoint, therefore $X^{\beta^i}$ and $X^{\gamma^j}$ are independent. This tells us that there are at most
$$\abs{B_k}^2\cdot \Par{\abs{\Lambda_L^d} - \abs{\Lambda_{L-k}^d}} \cdot \abs{\Lambda_k^d}=O\Par{L^{d-1}}$$
non-zero terms in (\ref{approximation_var_eq}). We know from (1) of Proposition \ref{coef_prop} that
\begin{equation} \label{path_mult_bound}
0\leq\Par{\path^k(\beta)-a^k_L\Par{\beta^i}}\Par{\path^k(\gamma)-a^k_L\Par{\gamma^j}}\leq \path^k(\beta)\path^k(\gamma).
\end{equation}
Since $\SetPred{X_n}{n\in\Zd}$ are identically distributed, we have $\Cov{X^{\beta^i} , X^{\gamma^j}} = \Cov{X^\beta , X^{\gamma^{j-i}}}$. From here we deduce that for fixed $\beta,\gamma\in B_k$, the term $\Cov{X^{\beta^i} , X^{\gamma^j}}$ only obtains a finite number of values: it is either $0$ or uniquely determined by $\beta,\gamma\in B_k$, the value of $j-i\in\Lambda_k^d$, and some of the (finite) moments of the underlying distribution $\textnormal{d}\rho$. Together with (\ref{path_mult_bound}), this gives us a uniform bound on all terms in (\ref{approximation_var_eq}), showing that indeed
$$\Var{T_L^k-\Tr{H_L^k}}=O\Par{L^{d-1}}.$$
\end{proof}

Our next step is a central limit theorem for the random variables $T_L^k$. Although our initial random variables $\SetPred{X_n}{n\in\Zd}$ were iid, for a fixed multi-index $\beta$ the random variables $\SetPred{X^{\beta^i}}{i\in\Zd}$ are generally not independent.
However, for any $i$ and $j$ sufficiently far apart ($j-i\notin\Lambda_k^d$ is sufficient), the variables $X^{\beta^i}$ and $X^{\beta^j}$ are independent. We use CLTs for weakly dependent random variables, by Hoeffding and Robbins \cite{HR}, and Neumann \cite{Neumann}, to prove:

\begin{thm} \label{main_CLT}
Let $B$ be any set of multi-indices such that $\beta_0>0$ for every $\beta\in B$. Let $\Set{a_\beta}_{\beta\in B}$ be a set of coefficients, such that $a_\beta=0$ for all but finitely many $\beta\in B$. Then
$$\frac 1{\Par{2L+1}^{d/2}} \sum_{\beta\in B} a_\beta \sum_{i\in\boxL} \Par{X^{\beta^i}-\Ex{X^{\beta^i}}}\overset{d}{\longrightarrow} N\Par{0,\sigma^2},$$
as $L\rightarrow\infty$, for some $\sigma^2\geq 0$.
%\\If $a_\beta\neq 0$ for exactly one $\beta\in B$, then $\sigma^2>0$ (IS THIS TRUE?!?)
\end{thm}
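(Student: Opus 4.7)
The plan is to rewrite the target sum as a single sum over $i \in \boxL$ of a stationary, $m$-dependent random field, and then invoke the multi-dimensional CLT for such fields that is established in Section~5 of the paper.

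First, I would interchange the order of summation. Setting $Y_i := \sum_{\beta \in B} a_\beta X^{\beta^i}$ (a finite sum, since only finitely many $a_\beta$ are non-zero), the expression in the theorem becomes
$$\frac{1}{(2L+1)^{d/2}} \sum_{i \in \boxL} \Par{Y_i - \Ex{Y_i}}.$$
So it suffices to prove a CLT for the centered sum of the random field $\Set{Y_i}_{i \in \Zd}$.

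Next, I would verify that $\Set{Y_i}_{i \in \Zd}$ is a stationary, $m$-dependent random field with finite moments. Stationarity follows from the fact that the $X_n$ are i.i.d.: for any $j \in \Zd$, the joint law of $\Par{Y_{i+j}}_{i \in \Zd}$ agrees with that of $\Par{Y_i}_{i \in \Zd}$. For the dependence structure, choose $K \in \N$ such that every $\beta \in B$ with $a_\beta \neq 0$ has support contained in $\Lambda_K^d$ (possible since there are finitely many such $\beta$, each with finite support). Then $Y_i$ is a polynomial in $\SetPred{X_n}{n \in \Lambda_K^d + i}$, so whenever $\max_v \abs{i_v - j_v} > 2K$ the variables $Y_i$ and $Y_j$ depend on disjoint sets of $X_n$'s and are therefore independent; hence $\Set{Y_i}$ is $2K$-dependent. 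Finite moments of all orders for $Y_i$ follow from H\"older's inequality, the finite support of the $\beta$'s, and the assumption that $\textnormal{d}\rho$ has finite moments.

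Finally, I would apply the multi-dimensional CLT for stationary $m$-dependent random fields on $\Zd$ stated in Section~5 of the paper, which generalizes the classical Hoeffding--Robbins and Neumann theorems. This yields convergence to $N\Par{0,\sigma^2}$ with
$$\sigma^2 = \sum_{j \in \Zd} \Cov{Y_0, Y_j},$$
a finite sum due to $m$-dependence. This $\sigma^2$ is automatically non-negative as a limit of variances and may be zero, which is permitted by our convention of allowing the degenerate normal.

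The main obstacle is not in this section but in Section~5: proving the $\Zd$-indexed CLT itself. The usual one-dimensional Hoeffding--Robbins argument relies on a block decomposition of $\Z$ into large blocks separated by short gaps, and the generalization to $d\geq 2$ requires carefully partitioning $\boxL$ into large cubes separated by thin corridors, verifying that the corridor contributions are negligible after normalization by $\Par{2L+1}^{d/2}$, and then applying a limiting characteristic-function or Lindeberg argument to the (independent) block contributions. Once that $\Zd$-CLT is in hand, the present theorem reduces to the bookkeeping above.
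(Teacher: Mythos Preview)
Your reduction is exactly what the paper does: set $Y_i=\sum_{\beta\in B}a_\beta\bigl(X^{\beta^i}-\Ex{X^{\beta^i}}\bigr)$, check that the field $\{Y_i\}_{i\in\Zd}$ is identically distributed, $m$-dependent, and has finite (third) moments, and then invoke Proposition~\ref{prop_aux_CLT}. One cosmetic point: the paper's $m$-dependence (Definition~\ref{m_dep_d1}) uses the $\ell^1$ metric $|n|=\sum_v|n_v|$, so your $\ell^\infty$ separation condition should be converted accordingly (e.g.\ take $m=2Kd$).

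The only divergence is in your final paragraph, where you speculate that Section~5 proves the $\Zd$-CLT via a big-blocks-plus-thin-corridors decomposition of $\boxL$ followed by a Lindeberg argument on the independent blocks. The paper takes a different route: it proves Proposition~\ref{prop_aux_CLT} by induction on $d$, with Hoeffding--Robbins giving the base case $d=1$, and the step $d\to d+1$ obtained by slicing $\boxLp$ into $2L+1$ parallel $d$-dimensional layers, applying the induction hypothesis to each layer, and feeding the resulting one-dimensionally indexed, $m$-dependent row sums into Neumann's triangular-array CLT (Theorem~\ref{thm_Neumann}). Your block approach is more self-contained and classical; the paper's induction trades the $d$-dimensional block geometry for the work of verifying Neumann's covariance hypotheses (\ref{cov_req_1})--(\ref{cov_req_2}). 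Either route yields the theorem.
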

% If $d=1$, the variables $X^{\beta^i}$ are $m$-dependent, and have a central limit theorem, due to Orey \cite{Orey}. For higher dimensions, our random variables clearly satisfy a $d$-dimensional analogue of $m$-dependence (SHOULD WE DEFINE HERE?). For $d=2$, a central limit theorem is proved in \cite{Cyprus}. For general dimension, we rely on the following result of NEUMAN (SPELLING?) \cite{Neuman} which extends Orey's theorem:
The proof is postponed to the appendix (Section 5).

\begin{cor} \label{approximate_CLT}
For every $k\in\N$,
$$\frac{T_L^k-\Ex{T_L^k}}{\Par{2L+1}^{d/2}}\overset{d}{\longrightarrow} N\Par{0,\sigma_k^2}$$
as $L\rightarrow\infty$, for some $\sigma_k^2\geq 0$.
\end{cor}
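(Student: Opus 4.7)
The plan is to obtain this corollary as a direct consequence of Theorem \ref{main_CLT} applied to the explicit formula in Definition \ref{approximating_def}. Recall that by (\ref{approximating_def_eq}),
$$T_L^k = \sum_{\beta \in B} \path^k(\beta) \sum_{i \in \boxL} X^{\beta^i},$$
so centering term by term gives
$$T_L^k - \Ex{T_L^k} = \sum_{\beta \in B} \path^k(\beta) \sum_{i \in \boxL} \Par{X^{\beta^i} - \Ex{X^{\beta^i}}}.$$
This is precisely the expression appearing in the conclusion of Theorem \ref{main_CLT}, with coefficients $a_\beta = \path^k(\beta)$.

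To invoke Theorem \ref{main_CLT}, I need to check two things: that $\beta_0 > 0$ for every $\beta \in B$, and that $\path^k(\beta) = 0$ for all but finitely many $\beta \in B$. The first is built into the choice of $B$ in Definition \ref{index_set_def}. For the second, note that for any balanced string $s \in \mathcal S^k$, the trajectory $y_0(s), y_1(s), \ldots, y_k(s)$ lies entirely in $\Lambda_k^d$, so the support of $\varphi(s)$ is contained in this cube. Since each representative $\beta \in B$ with $\path^k(\beta) \neq 0$ must equal some $\varphi(s)^i$ for a balanced string $s$ of length $k$, the support of $\beta$ also sits in a translate of $\Lambda_k^d$; combined with $\beta_0 > 0$ and $\sum_n \beta_n \leq k$, this restricts $\beta$ to a finite set.

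Once these hypotheses are verified, Theorem \ref{main_CLT} immediately yields the convergence
$$\frac{T_L^k - \Ex{T_L^k}}{\Par{2L+1}^{d/2}} \overset{d}{\longrightarrow} N\Par{0,\sigma_k^2}$$
for some $\sigma_k^2 \geq 0$, completing the proof. There is no genuine obstacle here, since the real combinatorial and probabilistic work has been quarantined in Theorem \ref{main_CLT}, whose proof is deferred to the appendix; the role of this corollary is simply to repackage that theorem in the specific form needed for the subsequent approximation argument via Proposition \ref{approximating_prop}.
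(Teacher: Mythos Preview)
Your proposal is correct and matches the paper's approach: the paper states Corollary~\ref{approximate_CLT} as an immediate consequence of Theorem~\ref{main_CLT} (with $a_\beta=\path^k(\beta)$) and does not supply a separate proof, having already noted after Definition~\ref{approximating_def} that $\path^k(\beta)=0$ for all but finitely many $\beta\in B$. Your additional verification of the finiteness condition is accurate and makes explicit what the paper takes for granted.
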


Now that we have a central limit theorem for our approximating random variables, we would like to compute the limit variances, and more generally, the limit covariances. We do this first for individual multi-indices:

\begin{lem} \label{basic_cov_lem}
For every two multi-indices $\beta$ and $\gamma$, we have
\begin{equation} \label{basic_cov_eq}
\underset{L\to\infty}\lim \frac 1{\Par{2L+1}^d}\Cov{\sum_{i\in\boxL} X^{\beta^i}, \sum_{i\in\boxL} X^{\gamma^i}}=\sum_{j\in\Zd}\Cov{X^\beta,X^{\gamma^j}}
\end{equation}
\end{lem}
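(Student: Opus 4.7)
The plan is to expand the covariance bilinearly as a double sum, apply translation invariance of the iid family $\{X_n\}_{n \in \Zd}$ to reduce it to a single sum, and then identify the limit through a lattice point count in which the boundary contribution is lower order.

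First, I would write
$$\Cov{\sum_{i\in\boxL} X^{\beta^i}, \sum_{i'\in\boxL} X^{\gamma^{i'}}} \;=\; \sum_{i,i' \in \boxL} \Cov{X^{\beta^i}, X^{\gamma^{i'}}}.$$
Because the $X_n$ are iid, shifting the index is measure preserving, and $\Cov{X^{\beta^i}, X^{\gamma^{i'}}} = \Cov{X^\beta, X^{\gamma^{i'-i}}}$. Setting $j = i'-i$ and collecting, the double sum becomes
$$\sum_{j \in \Zd} \Cov{X^\beta, X^{\gamma^j}} \cdot N_L(j), \qquad N_L(j) := \#\SetPred{i \in \boxL}{i+j \in \boxL}.$$

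Second, I would observe that only finitely many $j$ contribute. Indeed, if the supports of $\beta$ and $\gamma^j$ are disjoint, then $X^\beta$ and $X^{\gamma^j}$ are functions of disjoint sub-families of iid variables and hence independent, so the covariance vanishes. Since $\beta$ and $\gamma$ are finitely supported, there are only finitely many $j \in \Zd$ for which $\supp(\gamma^j) \cap \supp(\beta) \neq \emptyset$; call this finite set $J$.

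Third, for each fixed $j \in \Zd$, the count $N_L(j) = \abs{\boxL \cap (\boxL - j)}$ equals $(2L+1)^d - O(L^{d-1})$ as $L \to \infty$, where the implicit constant depends on $j$ but may be taken uniform over any finite set of $j$'s. Dividing by $(2L+1)^d$ and using finiteness of $J$ to interchange the limit with the sum, each term with $j \in J$ contributes $\Cov{X^\beta, X^{\gamma^j}}$, the boundary error vanishes, and terms with $j \notin J$ contribute $0$. This yields exactly the right-hand side of (\ref{basic_cov_eq}).

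There is no real obstacle here; the only point requiring attention is the finiteness of $J$, which follows from independence of functions of disjoint iid subfamilies. Everything else is a routine counting argument.
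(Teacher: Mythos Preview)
Your proposal is correct and follows essentially the same approach as the paper: expand bilinearly, use translation invariance to reduce to a sum over $j=i'-i$ weighted by the lattice-point count $N_L(j)$, observe that only finitely many $j$ contribute because disjoint supports force independence, and pass to the limit termwise. The paper's $z_j(L)$ is exactly your $N_L(j)$, and the remaining steps coincide.
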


Note that the sum on the right hand side of (\ref{basic_cov_eq}) is uniquely determined by $\beta,\gamma$, and the moments of the underlying distribution $\textnormal{d}\rho$, and it is in fact a finite sum: since $\beta$ and $\gamma$ are finitely supported, the supports of $\beta$ and $\gamma^j$ are disjoint (and therefore $X^\beta$ and $X^{\gamma^j}$ are independent) for all but finitely many $j\in\Zd$.

\begin{proof}
Since $\SetPred{X_n}{n\in\Zd}$ are identically distributed, the covariances are invariant to translations, and we may write
\begin{equation} \nonumber
	\begin{split}
\Cov{\sum_{i\in\boxL} X^{\beta^i}, \sum_{i\in\boxL} X^{\gamma^i}}&=\sum_{i,i'\in\boxL}\Cov{X^{\beta^i},X^{\gamma^{i'}}}\\
&=\sum_{i,i'\in\boxL}\Cov{X^\beta,X^{\gamma^{i'-i}}}\\
&=\sum_{j\in\Zd}z_j(L)\cdot \Cov{X^\beta,X^{\gamma^j}},
	\end{split}
\end{equation}
where $z_j(L)=\#\SetPred{i,i'\in\boxL}{j=i-i'}$. Clearly
$$\limL  \frac{z_j(L)}{\Par{2L+1}^d}=1$$
for any $j\in\Zd$, and since $\Cov{X^\beta,X^{\gamma^j}}\neq 0$ only for finitely many $j\in\Zd$, the claim follows.
\end{proof}

\begin{cor} \label{approximate_cov}
For every $k,\ell\in\N$,
\begin{equation}
%\limL \Cov{\frac{T^k_L}{\Par{2L+1}^{d/2}} , \frac{T^\ell_L}{\Par{2L+1}^{d/2}}}=\sum_{\beta\in B_k\ \gamma\in B_\ell} \path^k(\beta)\path^\ell(\gamma) \sum_{j\in\Zd} \Cov{X^\beta , X^{\gamma^j}}
	\begin{split}
&\limL \Cov{\frac{T^k_L}{\Par{2L+1}^{d/2}} , \frac{T^\ell_L}{\Par{2L+1}^{d/2}}}=\\
&\limL \frac 1{\Par{2L+1}^d}\Cov{T^k_L , T^\ell_L}=\sum_{\beta,\gamma \in B} \path^k(\beta)\path^\ell(\gamma) \sum_{j\in\Zd} \Cov{X^\beta , X^{\gamma^j}}
	\end{split}
\end{equation}
\end{cor}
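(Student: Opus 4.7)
The approach is to reduce the claim to Lemma \ref{basic_cov_lem} applied term-by-term to the finite expansion of $T^k_L$ and $T^\ell_L$ given by Definition \ref{approximating_def}. The first equality in the statement is a trivial consequence of the bilinearity of covariance: normalizing by $\Par{2L+1}^{d/2}$ in each slot pulls out the scalar factor $\Par{2L+1}^{-d}$, so both expressions agree identically for every $L$.

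For the second equality, I would substitute the definitions
$$T^k_L = \sum_{\beta \in B}\path^k(\beta)\sum_{i\in\boxL}X^{\beta^i}, \qquad T^\ell_L = \sum_{\gamma\in B}\path^\ell(\gamma)\sum_{i\in\boxL}X^{\gamma^i},$$
and expand by bilinearity of covariance into
$$\Cov{T^k_L, T^\ell_L} = \sum_{\beta,\gamma\in B}\path^k(\beta)\path^\ell(\gamma)\Cov{\sum_{i\in\boxL}X^{\beta^i}, \sum_{i\in\boxL}X^{\gamma^i}}.$$
The crucial observation is that the outer double sum is effectively finite, since $\path^k(\beta)=0$ and $\path^\ell(\gamma)=0$ outside of finite subsets of $B$ (any balanced string of length $k$ determines a multi-index of total weight at most $k$, so only finitely many $\beta\in B$ contribute). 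After dividing by $\Par{2L+1}^d$ and taking the limit as $L\to\infty$, the limit commutes with this finite sum, and Lemma \ref{basic_cov_lem} applied to each inner covariance yields the desired closed-form expression involving $\sum_{j\in\Zd}\Cov{X^\beta,X^{\gamma^j}}$.

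There is essentially no obstacle here: all of the analytic work has been absorbed into Lemma \ref{basic_cov_lem}, and only bilinearity of covariance together with the finiteness of the outer double sum is needed. In particular, no uniform or dominated convergence argument is required, because interchanging the limit with a finite sum is automatic. The corollary is, in effect, a bookkeeping step that packages Lemma \ref{basic_cov_lem} into a statement about the approximating variables $T^k_L, T^\ell_L$ directly.
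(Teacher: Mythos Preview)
Your proposal is correct and matches the paper's intended approach: the corollary is stated without proof precisely because it follows from Lemma \ref{basic_cov_lem} by bilinearity of covariance and the finiteness of the outer sum over $\beta,\gamma\in B$, exactly as you describe.
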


%\section{Proof of main theorem} \label{proofs}

This allows us to deduce results for the asymptotic behavior of the trace of monomials:
\begin{cor} \label{monom_CLT}
For any $k\in\N$,
$$\frac{\Tr{H_L^k}-\Ex{\Tr{H_L^k}}}{\Par{2L+1}^{d/2}} \overset{d}{\longrightarrow} N\Par{0,\sigma_k^2}$$
as $L\rightarrow\infty$, where
$$\sigma_k^2=\sum_{\beta,\gamma\in B} \path^k(\beta)\path^k(\gamma) \sum_{j\in\Zd} \Cov{X^\beta , X^{\gamma^j}}.$$
Furthermore, for every $k,\ell\in\N$,
\begin{equation} \nonumber
	\begin{split}
&\limL \Cov{\frac{\Tr{H_L^k}-\Ex{\Tr{H_L^k}}}{\Par{2L+1}^{d/2}} , \frac{\Tr{H_L^\ell}-\Ex{\Tr{H_L^\ell}}}{\Par{2L+1}^{d/2}}}=\\
&\sum_{\beta,\gamma\in B} \path^k(\beta)\path^\ell(\gamma)\sum_{j\in\Zd} \Cov{X^\beta , X^{\gamma^j}}.
	\end{split}
\end{equation}
\end{cor}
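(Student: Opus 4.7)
The plan is to transport the CLT from the approximating statistic $T_L^k$ to $\Tr{H_L^k}$ via Slutsky's theorem. By Corollary \ref{approximate_CLT}, $\Par{T_L^k - \Ex{T_L^k}}/\Par{2L+1}^{d/2} \overset{d}{\longrightarrow} N(0, \sigma_k^2)$ for some $\sigma_k^2 \geq 0$, and by Proposition \ref{approximating_prop} the difference
$$\frac{\Tr{H_L^k}-\Ex{\Tr{H_L^k}}}{\Par{2L+1}^{d/2}} - \frac{T_L^k-\Ex{T_L^k}}{\Par{2L+1}^{d/2}}$$
vanishes in probability. Slutsky's theorem then gives the distributional convergence of the $\Tr{H_L^k}$ side to the same Gaussian.

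To identify $\sigma_k^2$ with the stated combinatorial sum, I would apply Corollary \ref{approximate_cov} with $\ell = k$ to obtain
$$\limL \Var{T_L^k/\Par{2L+1}^{d/2}} = \sum_{\beta, \gamma \in B} \path^k(\beta)\path^k(\gamma) \sum_{j \in \Zd} \Cov{X^\beta, X^{\gamma^j}},$$
and then match this value against the variance of the weak limit via uniform integrability. Because $T_L^k/\Par{2L+1}^{d/2}$ is a polynomial of fixed degree $k$ in the iid variables $\Set{X_n}$, all of whose moments are finite by the standing assumption, the sequence $\left(\Par{T_L^k-\Ex{T_L^k}}/\Par{2L+1}^{d/2}\right)^2$ is uniformly integrable; combined with the distributional convergence, this upgrades to convergence of second moments, forcing $\sigma_k^2$ to equal the sum above.

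For the covariance formula, I would write $\Tr{H_L^\bullet} = T_L^\bullet + \Par{\Tr{H_L^\bullet}-T_L^\bullet}$ for $\bullet \in \{k,\ell\}$ and expand the covariance bilinearly. The three error terms each contain a factor $\Tr{H_L^\bullet}-T_L^\bullet$, whose variance is $O\Par{L^{d-1}}$ by the proof of Proposition \ref{approximating_prop}. Applying Cauchy--Schwarz and using $\Var{T_L^\bullet} = O\Par{L^d}$ (from Corollary \ref{approximate_cov}), each such error contributes $O\Par{L^{d-1/2}} = o\Par{L^d}$ to the unnormalized covariance. After dividing by $\Par{2L+1}^d$, only $\Cov{T_L^k, T_L^\ell}/\Par{2L+1}^d$ survives, and its limit is given directly by Corollary \ref{approximate_cov}.

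The argument is a synthesis of results already established in the section, so I do not anticipate a substantial technical obstacle. The only mildly delicate point is the passage from convergence in distribution plus convergence of variances of $T_L^k$ to the identification of the variance of the Gaussian limit, but the polynomial structure of $T_L^k$ together with the finite-moment hypothesis on $\textnormal{d}\rho$ makes the required uniform integrability immediate.
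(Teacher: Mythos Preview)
Your argument is correct and follows the same route the paper intends: combine Proposition~\ref{approximating_prop} with Corollary~\ref{approximate_CLT} (via Slutsky) for the CLT, and use Corollary~\ref{approximate_cov} together with the $O(L^{d-1})$ variance bound on $\Tr{H_L^k}-T_L^k$ for the covariance identity. The one superfluous step is the uniform integrability detour to identify $\sigma_k^2$: in this paper the underlying CLT (Proposition~\ref{prop_aux_CLT}, on which Theorem~\ref{main_CLT} and hence Corollary~\ref{approximate_CLT} rest) already \emph{defines} the limiting variance as $\lim_L (2L+1)^{-d}\Var{\sum_i Y_i}$, so Corollary~\ref{approximate_cov} with $\ell=k$ computes $\sigma_k^2$ directly without any second-moment convergence argument.
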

\begin{proof}
Follows directly from Proposition \ref{approximating_prop} and Corollaries \ref{approximate_CLT}, \ref{approximate_cov}.
\end{proof}

And now we can prove the CLT for any polynomial:
\begin{propos} \label{poly_CLT}
Let $\poly(x)=\sum_{k=0}^m a_k x^k\in\R[x]$ be a polynomial. Then
$$\frac{\Tr{\poly(H_L)}-\Ex{\Tr{\poly(H_L)}}}{\Par{2L+1}^{d/2}} \overset{d}{\longrightarrow} N\Par{0,\sigma(\poly)^2}$$
as $L\rightarrow\infty$, where
$$\sigma(\poly)^2=\sum_{k,\ell=1}^m a_k a_\ell \sum_{\beta,\gamma \in B} \path^k(\beta)\path^\ell(\gamma) \sum_{j\in\Zd} \Cov{X^\beta , X^{\gamma^j}}.$$
\end{propos}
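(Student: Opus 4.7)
The plan is to reduce the statement to a single application of Theorem \ref{main_CLT} together with the covariance computation of Corollary \ref{approximate_cov}, using linearity throughout.

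First, since $\Tr{H_L^0}=(2L+1)^d$ is deterministic, the $k=0$ contribution cancels from the fluctuation, and by linearity of the trace and the expectation,
$$\Tr{\poly(H_L)} - \Ex{\Tr{\poly(H_L)}} = \sum_{k=1}^m a_k \Par{\Tr{H_L^k} - \Ex{\Tr{H_L^k}}}.$$
Dividing by $(2L+1)^{d/2}$ and applying Proposition \ref{approximating_prop} to each of the finitely many values $k=1,\ldots,m$ shows that
$$\frac{\Tr{\poly(H_L)} - \Ex{\Tr{\poly(H_L)}}}{(2L+1)^{d/2}} - \frac{1}{(2L+1)^{d/2}}\sum_{k=1}^m a_k \Par{T_L^k - \Ex{T_L^k}} \overset{P}{\longrightarrow} 0,$$
so by Slutsky's theorem it suffices to establish the proposition with $\Tr{H_L^k}$ replaced by $T_L^k$ throughout.

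Next, I would interchange the order of summation in (\ref{approximating_def_eq}) to write
$$\sum_{k=1}^m a_k T_L^k \;=\; \sum_{\beta \in B} c_\beta \sum_{i \in \boxL} X^{\beta^i}, \qquad c_\beta := \sum_{k=1}^m a_k \path^k(\beta).$$
For each fixed $k$ only finitely many $\beta\in B$ satisfy $\path^k(\beta)\neq 0$, and the outer sum over $k$ is finite, so $c_\beta$ vanishes off a finite subset of $B$. Hence Theorem \ref{main_CLT} applies verbatim to this linear combination and produces convergence in distribution to $N(0,\sigma^2)$ for some $\sigma^2\geq 0$.

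Finally, to identify $\sigma^2$ with the announced formula, I would use bilinearity of covariance:
$$\frac{1}{(2L+1)^d}\Var{\sum_{k=1}^m a_k T_L^k} = \sum_{k,\ell=1}^m a_k a_\ell \cdot \frac{1}{(2L+1)^d}\Cov{T_L^k, T_L^\ell},$$
and pass to the limit term-by-term in this finite sum using Corollary \ref{approximate_cov}. This yields exactly the expression claimed for $\sigma(\poly)^2$, and it coincides with the variance of the Gaussian limit by the same reasoning already employed in Corollary \ref{monom_CLT}. There is no genuine obstacle: all the heavy lifting is done in Proposition \ref{approximating_prop} (approximation), Theorem \ref{main_CLT} (CLT for linear combinations of shifted monomials), and Corollary \ref{approximate_cov} (covariance formula); the present proof is essentially bookkeeping — linearize, approximate, apply the CLT once, and compute the limiting variance by bilinearity.
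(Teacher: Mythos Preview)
Your proposal is correct and follows essentially the same approach as the paper: linearize, replace $\Tr{H_L^k}$ by $T_L^k$ via Proposition \ref{approximating_prop}, rewrite the sum as $\sum_{\beta\in B} c_\beta \sum_{i\in\boxL} X^{\beta^i}$ with $c_\beta=\sum_k a_k \path^k(\beta)$, apply Theorem \ref{main_CLT} once, and read off the variance via Corollary \ref{approximate_cov}. The only cosmetic difference is that the paper cites Corollary \ref{monom_CLT} for the variance formula while you invoke Corollary \ref{approximate_cov} plus bilinearity directly, which amounts to the same thing.
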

\begin{proof}%[Proof of Proposition \ref{poly_CLT}]
Since $\Tr{\poly(H_L)}-\Ex{\Tr{\poly(H_L)}}$ doesn't depend on $a_0$, we may assume w.l.o.g. that $a_0=0$ and $\deg(\poly)=m>0$. Then
$$
\frac{\Tr{\poly(H_L)}-\Ex{\Tr{\poly(H_L)}}}{\Par{2L+1}^{d/2}}=\sum_{k=1}^m a_k \frac{\Tr{H_L^k}-\Ex{\Tr{H_L^k}}}{\Par{2L+1}^{d/2}},
$$
and from Corollary \ref{monom_CLT} we obtain the value of the variance $\sigma(\poly)^2$. Using Proposition \ref{approximating_prop} and (\ref{approximating_def_eq}), we now rewrite
\begin{equation} \label{limit_decomp}
	\begin{split}
&\limL \frac{\Tr{\poly(H_L)}-\Ex{\Tr{\poly(H_L)}}}{\Par{2L+1}^{d/2}}=%\\&
\limL \frac 1{\Par{2L+1}^{d/2}} \sum_{k=1}^m a_k \Par{T_L^k-\Ex{T_L^k}}=\\
&\limL \frac 1{\Par{2L+1}^{d/2}} \sum_{k=1}^m a_k \sum_{\beta\in B} \path^k(\beta) \sum_{i\in\boxL} \Par{X^{\beta^i}-\Ex{X^{\beta^i}}}=\\
&\limL \frac 1{\Par{2L+1}^{d/2}}
\sum_{\beta\in B}\Par{\sum_{k=1}^m a_k \path^k(\beta)} \sum_{i\in\boxL} \Par{X^{\beta^i}-\Ex{X^{\beta^i}}}.
	\end{split}
\end{equation}
Note that the first equality holds in the sense that both limit random variables have the same distribution.
Theorem \ref{main_CLT} now applies, proving that the limit has a normal distribution.
\end{proof}

%%%%%%%%%%%%%%%%%%%SECTION 4%%%%%%%%%%%%%%%%%%%%%%%%%
\section{Degenerate and non-degenerate cases}
Now that we proved the convergence in Theorem \ref{main_thm}, it remains to determine under which conditions the limit distribution is non-degenerate, that is when $\sigma(\poly)^2>0$ for a non-constant polynomial $\poly\in\R[x]$. It turns out that $\sigma(\poly)^2$ is always positive if $\deg(\poly)\neq 2,3,5$, but for some polynomials of degree $2,3,5$ and some specific underlying distributions, the variance may vanish. We first demonstrate positive variance in degrees $\neq 2,3,5$:

\begin{propos} \label{deg_good_var}
Let $\poly(x)=\sum_{k=0}^m a_k x^k\in\R[x]$ be a non-constant polynomial of degree $m\neq 2,3,5$. Then $\sigma(\poly)^2>0$.
\end{propos}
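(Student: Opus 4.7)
Fix $m = \deg(\poly) \neq 2, 3, 5$. By Corollary \ref{approximate_cov},
$$\sigma(\poly)^2 = \limL \frac{\sVar{W_L}}{\abs{\boxL}} = \sum_{j \in \Zd}\Cov{Y_0, Y_j},$$
where $W_L := \sum_{\beta \in B} c_\beta \sum_{i \in \boxL} X^{\beta^i}$, $Y_i := \sum_{\beta \in B} c_\beta X^{\beta^i}$, and $c_\beta := \sum_{k=1}^m a_k \path^k(\beta)$. The strategy is to exhibit, for each admissible $m$, a ``witness'' multi-index $\beta_* \in B$ with $c_{\beta_*} \neq 0$, together with an extremal lattice vertex $n_*$, providing a positive, uncancellable contribution to $\sigma(\poly)^2$. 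The case $m = 1$ is immediate: $W_L - \Ex{W_L} = a_1 \sum_i X_i$, so $\sigma(\poly)^2 = a_1^2 \sVar{X_0} > 0$.

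For even $m \geq 4$, take $\beta_* = \delta + \delta^{n_*}$ with $n_* = \tfrac{m-2}{2}\,e_1$. Two combinatorial facts drive the argument: (i) the parity of $k - |\beta_*|$, combined with the minimum length $m - 2$ required for a balanced walk to visit both $0$ and $n_*$, forces $\path^k(\beta_*) = 0$ for every $k < m$; (ii) inserting two $V$-symbols into the minimal excursion $U_1^{(m-2)/2}D_1^{(m-2)/2}$ certifies $\path^m(\beta_*) > 0$. Hence $c_{\beta_*} = a_m\path^m(\beta_*) \neq 0$. A finite case check confirms that $\beta_*$ is the only $\beta \in B$ with $\beta_{n_*} > 0$ and $c_\beta \neq 0$, all competitors failing parity or walk-length bounds in the range $k \leq m$. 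Consequently $Y_0 = A + c_{\beta_*} X_0 X_{n_*}$ with $A$ not involving $X_{n_*}$, and the identity $\Ex{X_n}=0$ together with independence causes every contribution to $\Cov{Y_0, Y_j}$ that involves an $X_{n_*}$-factor to vanish. One therefore obtains
$$\sigma(\poly)^2 = \sum_{j \in \Zd}\Cov{A, A_j} + c_{\beta_*}^2 \sVar{X_0}^2 \;\geq\; c_{\beta_*}^2 \sVar{X_0}^2 > 0,$$
since the first term is itself a CLT variance for the stationary process of shifts of $A$, hence non-negative.

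For odd $m \geq 7$ the parity obstruction rules out the weight-$2$ witness, so take instead the weight-$3$ witness $\beta_* = \delta + \delta^{e_1} + \delta^{n_*}$ with $n_* = \tfrac{m-3}{2}\,e_1$, chosen so the minimum balanced walk visiting $\supp(\beta_*)$ has length exactly $m-3$; again $\path^k(\beta_*) = 0$ for $k < m$ while $\path^m(\beta_*) > 0$. The variable $X_{n_*}$ may now enter $Y_0$ through several monomials, so the analysis proceeds via conditioning on $\mathcal{F} = \sigma(\{X_n : n \neq n_*\})$. Expanding $Y_0 = A_0 + A_1 X_{n_*} + A_2 X_{n_*}^2$ (no higher powers of $X_{n_*}$ appear for $k \leq m$), one catalogues the $\beta$'s contributing to $A_1, A_2$ and finds that $A_1$ contains the witness monomial $c_{\beta_*} X_0 X_{e_1}$, which cannot be cancelled by any other monomial in $A_1$ since the other monomials are distinct in their $X_0$-degree. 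Hence $A_1$ is a non-trivial polynomial and $\Ex{A_1^2} > 0$ for any non-trivial underlying distribution. A parallel conditional analysis of the covariances $\Cov{Y_0, Y_j}$ for the finitely many $j$ where $Y_j$ involves $X_{n_*}$, using the vanishing of conditional cross terms forced by $\Ex{X_n}=0$, shows that the positive contribution of the witness to $\sum_j \Cov{Y_0, Y_j}$ is retained and yields $\sigma(\poly)^2 > 0$.

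The main obstacle is the bookkeeping verifying, case by case, that the chosen $\beta_*$ isolates a truly non-cancelling contribution, both in $\sVar{Y_0}$ and across the cross-covariances. This isolation \emph{fails} exactly at $m = 5$, where the weight-$3$ multi-indices $\delta + 2\delta^{e_1}$ and $2\delta + \delta^{e_1}$ (carrying $a_5$) can be tuned against $\delta + \delta^{e_1}$ (carrying $4a_4$) and the diagonal terms coming from $a_1, a_3$ to produce a total cancellation in the two-point case --- precisely the algebraic structure captured by the exceptional polynomial $\polyb_5$.
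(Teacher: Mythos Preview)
Your choice of witnesses $\beta_*$ matches the paper's exactly, and the treatment of $m=1$ and even $m\geq 4$ is correct in outline. In the even case your vertex-isolation argument is in fact equivalent to what the paper does: once you verify that $\beta_*$ is the unique $\beta\in B$ with $\beta_{n_*}>0$ and $c_\beta\neq 0$, the decomposition $Y_0=A+c_{\beta_*}X^{\beta_*}$ is the same as the paper's splitting into $Y_L^1$ and $Y_L^2$, and showing the cross terms $\Cov{X^\beta,X_jX_{j+n_*}}$ vanish is (by shift-invariance) the same as showing $\Cov{X^{\beta_*},X^{\beta^j}}=0$ for all relevant $\beta\neq\beta_*$.

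The odd case $m\geq 7$ has a genuine gap. As you note, several $\beta\in B$ with $\beta_{n_*}>0$ and $c_\beta\neq 0$ can now coexist (for instance $\delta+\delta^{n_*}$, via even $k\leq m-1$), so the clean vertex isolation fails. Your proposed remedy, conditioning on $\mathcal F=\sigma(X_n:n\neq n_*)$, at best controls $\Var{Y_0}$; it gives no mechanism for handling the full sum $\sum_j\Cov{Y_0,Y_j}$, since each $Y_j$ meets $X_{n_*}$ through a different collection of shifted monomials. Knowing that $A_1$ is a nonzero polynomial does not by itself bound $\sigma(\poly)^2$ away from zero: the $A_2$-terms and the cross-covariances over $j$ could in principle cancel it, and your ``parallel conditional analysis'' is not an argument but a hope.

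The paper sidesteps all of this with one observation you do not use: the witness $\beta_*$ takes only the values $0$ and $1$. Because $\Ex{X_n}=0$, the covariance $\Cov{X^{\beta_*},X^{\beta^j}}$ can be nonzero only when $\beta^j_n\geq(\beta_*)_n$ for every $n$ (else a factor $\Ex{X_n}$ appears). Thus any correlated $\beta^j$ dominates $\beta_*$ pointwise; if $\beta\neq\beta_*$ this forces $|\beta|\geq|\beta_*|+1$ while $\supp(\beta^j)\supseteq\supp(\beta_*)$, and the walk-length bound then gives $p^k(\beta)=0$ for all $k\leq m$. So $c_{\beta_*}\sum_i X^{\beta_*^i}$ is uncorrelated with the entire remainder, uniformly in the even and odd cases, yielding $\sigma(\poly)^2\geq c_{\beta_*}^2\Var{X^{\beta_*}}>0$ with no conditioning and no cataloguing of which $\beta$ touch $n_*$. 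Isolate the multi-index, not the vertex.
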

\begin{proof}
Using (\ref{limit_decomp}), we write:
\begin{equation} \label{limit_var}
	\begin{split}
&\sigma(\poly)^2=\Var{\limL \frac{\Tr{\poly(H_L)}-\Ex{\Tr{\poly(H_L)}}}{\Par{2L+1}^{d/2}}}=\\
&\limL \Var{\frac 1{\Par{2L+1}^{d/2}} \sum_{\beta\in B}\Par{\sum_{k=1}^m a_k \path^k(\beta)} \sum_{i\in\boxL} \Par{X^{\beta^i}-\Ex{X^{\beta^i}}}}.
	\end{split}
\end{equation}
We follow the same general method used in \cite{BGW} - it is sufficient to find a multi-index $\abeta\in B$, with the following properties:
\begin{enumerate}
	\item $\path^m(\abeta)\neq 0$.
	\item $\path^k(\abeta)=0$ for every $k<m$.
	\item $\sum_{j\in\Zd}\Cov{X^\abeta,X^{\abeta^j}}>0$.
	\item $\Cov{X^\abeta, X^{\beta^j}}=0$ for every $j\in\Zd$ and every $\abeta\neq\beta\in B$ satisfying $\path^k(\beta)\neq 0$ for some $1\leq k\leq m$.
\end{enumerate}
If we find such $\abeta$, we deduce from property (4) that the random variables
$$Y_L^1\equiv a_m \path^m(\abeta)\sum_{i\in\boxL} \Par{X^{\abeta^i}-\Ex{X^{\abeta^i}}}$$
and
$$Y_L^2\equiv \sum_{\beta\in B\setminus\{\abeta\}} \Par{\sum_{k=1}^m a_k \path^k(\beta)} \sum_{i\in\boxL} \Par{X^{\beta^i}-\Ex{X^{\beta^i}}}$$
are uncorrelated (for any $L\in\N$), and (\ref{limit_var}) becomes
\begin{equation} \nonumber
	\begin{split}
\sigma(\poly)^2&=\limL \Var{\frac {Y_L^1+Y_L^2}{\Par{2L+1}^{d/2}}}\\
&=\limL \Var{\frac {Y_L^1}{\Par{2L+1}^{d/2}}}+\limL \Var{\frac {Y_L^2}{\Par{2L+1}^{d/2}}}\\
&\geq\limL \Var{\frac {Y_L^1}{\Par{2L+1}^{d/2}}}=a_m \path^m(\abeta) \sum_{j\in\Zd}\Cov{X^\abeta , X^{\abeta^j}}>0,
	\end{split}
\end{equation}
where the final equality is due to Lemma \ref{basic_cov_lem}. We make the following choices for $\abeta$:
\begin{enumerate}
	\item If $m=1$, choose $\abeta=\delta$.
	\item If $m\geq 4$ is even, choose $\abeta=\delta+\delta^{\Par{\frac m2-1}e_1}$.
	\item If $m\geq 7$ is odd, choose $\abeta=\delta+\delta^{e_1}+\delta^{\Par{\frac{m-3}{2}}e_1}$.
\end{enumerate}
The proof that these $\abeta$ satisfy properties (1) and (2) is straightforward path counting. For (3) and (4), recall that
\begin{equation} \nonumber
\begin{split}
\Cov{X^\abeta, X^{\beta^j}}&=\Ex{X^{\abeta}X^{\beta^j}}-\Ex{X^\abeta}\Ex{X^{\beta^j}}\\
&=\Ex{\prod_{n\in\Zd}X_n^{\abeta_n}X_n^{\beta_n^j}}-\Ex{\prod_{n\in\Zd}X_n^{\abeta_n}}\Ex{\prod_{n\in\Zd}X_n^{\beta_n^j}}\\
&=\prod_{n\in\Zd}\Ex{X_n^{\abeta_n+\beta^j_n}}-\prod_{n\in\Zd}\Ex{X_n^{\abeta_n}}\Ex{X_n^{\beta^j_n}}.
\end{split}
\end{equation}
If there exists any $n\in\Zd$ such that $\gamma_n=1$ and $\beta^j_n=0$, the term $\Ex{X_n}=0$ appears in both products, thus $\Cov{X^\abeta , X^{\beta^j}}=0$. Thus any $\beta^j$ for which $\Cov{X^\abeta, X^{\beta^j}}\neq 0$ must have $\beta^j_n\geq\abeta_n$ for every $n\in\Zd$.
If $\beta^j=\abeta$, since $\beta,\abeta\in B$ we must also have $j=0$ and $\beta=\abeta$. Otherwise, we have $\beta^j_n>\abeta_n$ for some $n\in\Zd$, and it is straightforward to verify that every string $s$ with $\varphi(s)^i=\beta^j$ must have length $>m$, therefore $p^k(\beta)=0$ for every $1\leq k\leq m$.
%It is straightforward to verify that for any $\beta$ satisfying $\path^k(\beta)\neq 0$ for some $1\leq k\leq m$, there is no such $n\in\Zd$ if and only if $\beta=\abeta$ and $j=0$.

Note that there is some freedom in the choice of the representative set $B$, but one may choose $B$ such that $\abeta\in B$ in all of the above cases, or alternatively replace the above choice of $\abeta$ with some $\abeta^i\in B$.
\end{proof}

For polynomials $\poly$ of degree $2,3$ or $5$, we must carefully analyze all cases. Since there are specific underlying distributions and polynomials $\poly$ for which $\sigma(\poly)^2=0$, and we want an explicit description of all such cases, we need to explicitly compute all non-zero values of $\path^k(\beta)$, for $1\leq k\leq 5$.
% Since it is possible to have $\sigma(P)^2=0$ for some distributions, and want an explicit description of such polynomials, we must compute all non-zero values of $\path^k(\beta)$, for $1\leq k\leq 5$.

\begin{lem} \label{coef_values}
%Denote $E=\Set{e_1,e_2,\ldots,e_d}$. %, and $F=\SetPred{f\in\Zd}{\abs{f}=1}$ (so every $f\in F$ is either $e$ or $-e$, for some unique $e\in E$).
If $k\in\Set{1,2,\ldots,5}$ and $\gamma$ is a multi-index with $\path^k(\gamma)>0$, then
\begin{enumerate}
\item $\gamma$ is either equivalent to a unique $\beta$ which equals $m\cdot\delta$ $($for some $m\in\Set{1,2,\ldots,5}$$)$, or to one of $\delta+\delta^e$, $2\delta+\delta^e$, or $2\delta+\delta^{-e}$ $($for some $e\in \Set{e_1,e_2,\ldots,e_d}$$)$.
\item The value of $\path^k(\gamma)=\path^k(\beta)$ is given in the table below $($empty entries correspond to $\path^k(\beta)=0$$)$:
%Then every multi-index $\beta$ for which $\path^k(\beta)>0$ for some $1\leq k\leq 5$ is equivalent to exactly one multi-index appearing in the first row of the table below, where $e$ indicates some $e\in E$ and $f$ is some $f\in F$:% The left column indicates $k$, and the values specified in the table are all non-zero values of $\path^k(\beta)$.
%\begin{center}Non-zero values of $\path^k(\beta)$, for $k=1,2,\ldots,5$\end{center}
$$\begin{array}{|c|c|c|c|c|c|c|c|}
\hline
\raisebox{-.2em}{$k$} \diagdown \raisebox{.2em}{$\beta$} &\delta &2\delta &3\delta &4\delta &5\delta &\delta+\delta^e &2\delta + \delta^{\pm e} \\
\hline
1 &1 & & & & & & \\
\hline
2 & &1 & & & & & \\
\hline
3 &6d & &1 & & & & \\
\hline
4 & &8d & &1 & &4 & \\
\hline
5 &60d^2-30d & &10d & &1 & &5 \\
\hline
\end{array}$$
\end{enumerate}
\end{lem}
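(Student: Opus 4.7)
A balanced string $s \in \mathcal{S}^k$ consists of $\abs{\varphi(s)} := \sum_{n \in \Zd} \varphi(s)_n$ copies of the symbol $V$ interleaved with $k - \abs{\varphi(s)}$ move-symbols from $\Set{U_1, \ldots, U_d, D_1, \ldots, D_d}$, whose associated steps $\pm e_v$ must sum to zero (since $y_k(s) = y_0(s)$). Because every move toggles the parity of the $\ell^1$-norm of the walker's position, the number of moves $k - \abs{\varphi(s)}$ must be even. Moreover, for $k \leq 5$ and $\abs{\varphi(s)} \geq 1$ the walker takes at most four moves, which restricts the visited points (hence the support of $\varphi(s)$) to a small set.

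\textbf{Part (1).} I enumerate the possibilities by the number of moves. With zero moves, all V's sit at one point and $\beta = \abs{\beta}\delta$. With exactly two moves --- which necessarily form a pair $U_v D_v$ or $D_v U_v$ for some axis $v$ --- the support of $\varphi(s)$ is contained in $\Set{0, \pm e_v}$; partitioning the $\abs{\beta}$ V's among these sites and reducing modulo equivalence yields the orbits $\abs{\beta}\delta$, and (for $\abs{\beta} = 2$) $\delta + \delta^e$, and (for $\abs{\beta} = 3$) the two distinct orbits $2\delta + \delta^e$ and $2\delta + \delta^{-e}$; these last two are distinct because the only shift sending one support point onto the other also interchanges the mass-$2$ and mass-$1$ sites. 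With four moves, the only relevant case is $k = 5$, $\abs{\beta} = 1$, giving orbit $\delta$ regardless of which visited point carries the $V$. Taking the union yields exactly the list in~(1).

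\textbf{Part (2).} For $\beta = m\delta$ (with $k = m$) the only string is $V \cdots V$, so $\path^m(m\delta) = 1$. For $\beta = \delta$ with $k \in \Set{1, 3, 5}$, the string is a single $V$ inserted into a pure closed walk of length $k - 1$ on $\Zd$. Since the number of closed walks of length $2n$ on $\Zd$ equals $\sum_{\alpha \in \N^d,\, \abs{\alpha} = n} (2n)! / \prod_v (\alpha_v!)^2$, giving $1$, $2d$, and $12d^2 - 6d$ for $n = 0, 1, 2$, multiplying by the $k$ possible positions for the $V$ yields the table entries $1$, $6d$, and $60d^2 - 30d$. For the remaining cases ($\beta = 2\delta$, $k = 4$; $\beta = 3\delta$, $k = 5$; $\beta = \delta + \delta^e$, $k = 4$; $\beta = 2\delta + \delta^{\pm e}$, $k = 5$) the string has exactly two moves splitting the V-positions into three consecutive blocks of sizes $(a, b, c)$ with $a + b + c = \abs{\beta}$. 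The V's in the outer blocks are applied at $0$ and those in the middle block at $\pm e_v$, so $\varphi(s) = (a + c)\delta + b\,\delta^{\pm e_v}$. For each target orbit I enumerate the triples $(a, b, c)$ and the compatible axis/sign pairs $(v, \pm)$.

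\textbf{Main obstacle.} The delicate case is $k = 5$ with $\beta$ in one of the orbits $2\delta + \delta^{\pm e}$. Both decompositions $(a + c, b) = (2, 1)$ (yielding $\varphi(s) = 2\delta + \delta^{\pm e_v}$ directly) and $(a + c, b) = (1, 2)$ (yielding $\varphi(s) = \delta + 2\delta^{\pm e_v}$, which after the unique shift placing the mass-$2$ point at $0$ becomes $2\delta + \delta^{\mp e_v}$) produce multi-indices of the right shape, but they contribute to \emph{different} orbits depending on the sign of the move. Tracking this sign-swap carefully shows that each of $2\delta + \delta^e$ and $2\delta + \delta^{-e}$ receives exactly $3 + 2 = 5$ strings. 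Analogous (but simpler) bookkeeping for $\beta \sim 3\delta$ with $k = 5$, which must combine the contributions of the ``all V's at the origin'' configurations $(a + c, b) = (3, 0)$ and the single ``all V's at the nearest neighbor'' configuration $(a + c, b) = (0, 3)$, gives $(4 + 1) \cdot 2d = 10d$.
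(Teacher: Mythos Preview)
Your argument is correct and follows exactly the approach the paper describes but omits: direct enumeration of the balanced strings in $\mathcal S^k$ for $k\leq 5$. Your organization by the number of move-symbols, together with the parity observation and the closed-walk count $\sum_{\abs{\alpha}=n}(2n)!/\prod_v(\alpha_v!)^2$, makes the enumeration clean; in particular your handling of the sign-swap in the $2\delta+\delta^{\pm e}$ case for $k=5$ is the one genuinely delicate point, and you track it correctly to obtain $3+2=5$ for each of the two orbits.
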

To prove the lemma, we found no alternative to enumerating the relevant strings in $\mathcal S^k$ (for $k=1,2,\ldots,5$). We omit this technical proof.

Our method of verifying which polynomials $\poly(x)=\sum_{k=0}^5 a_k x^k$ satisfy $\sigma(\poly)^2>0$, is to describe random variables $W_1,W_2,\ldots,W_5$ such that \\
$\Var{\sum_{k=1}^5 a_k W_k}=\sigma(\poly)^2$ (for any choice of coefficients $a_0,a_1,\ldots,a_5$). We then explore the random variable $\sum_{k=1}^5 a_k W_k$ and determine under which conditions it is almost surely constant.\\
If $\deg(\poly)\leq 3$, we may replace $\Set{W_i}$ with a simpler set of random variables, $T_1,T_2,T_3$. We verify this case before approaching polynomials of degree 5:

\begin{propos} \label{deg_23_var}
Let $\poly(x)=\sum_{k=0}^m a_k x^k\in\R[x]$ be a polynomial of degree $1\leq m\leq 3$. Then:
\begin{enumerate}
\item If the underlying distribution $(\textnormal{d}\rho)$ is supported by more than three values, then $\sigma(\poly)^2>0$.
\item If the underlying distribution is supported by exactly three values, denoted $a,b,c\in\R$, then $\sigma(\poly)^2=0$ iff $\poly=a_3 \tildq_3 + a_0$, where
$$\tildq_3(x)=x^3-(a+b+c)x^2+(ab+ac+bc-6d)x.$$
\item If the underlying distribution is supported by exactly two values, denoted $a,b\in\R$, then $\sigma(\poly)^2=0$ iff $\poly=a_3 \polyb_3 + a_2 \polyb_2 + a_0$, where
$$\polyb_3(x)=x^3 - (a^2+ab+b^2+6d)x,\qquad \polyb_2(x)=x^2 - (a+b)x.$$
\end{enumerate}
\end{propos}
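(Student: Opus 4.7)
The plan is to collapse the limit random variable to a classical linear statistic of the i.i.d.\ variables $\SetPred{X_i}{i\in\Zd}$, thereby reducing the vanishing of the limit variance to the elementary question of when a polynomial of degree at most $3$ is constant on $S:=\supp(\textnormal{d}\rho)$. By Lemma \ref{coef_values}, the only $\beta\in B$ with $\path^k(\beta)\neq 0$ for some $k\in\Set{1,2,3}$ are $\beta\in\Set{\delta,2\delta,3\delta}$, with $\path^1(\delta)=\path^2(2\delta)=\path^3(3\delta)=1$ and $\path^3(\delta)=6d$. Substituting into (\ref{approximating_def_eq}) yields
$$
\sum_{k=1}^{3} a_k T_L^k \;=\; \sum_{i\in\boxL} h(X_i), \qquad h(x):=a_3 x^3+a_2 x^2+(a_1+6 d\, a_3)\, x.
$$
Since the $X_i$ are i.i.d.\ with all moments finite, so are the $h(X_i)$; the classical i.i.d.\ CLT together with Proposition \ref{approximating_prop} then gives
$$
\sigma(\poly)^2 \;=\; \Var{h(X_0)},
$$
so $\sigma(\poly)^2=0$ if and only if $h$ is constant on $S$.

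The remainder is a three-case analysis on $\abs{S}$. If $\abs{S}>3$, any polynomial of degree at most $3$ that is constant on more than three points is identically constant; since $h(0)=0$, this forces $h\equiv 0$ and hence $a_3=a_2=a_1=0$, contradicting $\deg(\poly)\geq 1$, so $\sigma(\poly)^2>0$. If $S=\Set{a,b,c}$, then $h$ is constant on $S$ iff $h(x)-K=a_3(x-a)(x-b)(x-c)$ for some $K\in\R$; evaluating at $x=0$ forces $K=a_3 abc$, and matching coefficients against $h$ yields $a_2=-a_3(a+b+c)$ and $a_1=a_3(ab+ac+bc-6d)$, i.e., $\poly=a_3\tildq_3+a_0$. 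If $S=\Set{a,b}$, then $h(x)-K=(x-a)(x-b)(\alpha x+\beta)$ for some $K,\alpha,\beta\in\R$; matching coefficients yields $\alpha=a_3$, $\beta=a_2+a_3(a+b)$, and $a_1=-a_3(a^2+ab+b^2+6d)-a_2(a+b)$, i.e., $\poly=a_3\polyb_3+a_2\polyb_2+a_0$.

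The only non-routine step is the collapse to a single-variable statistic $h(X_0)$, which only requires reading off the $\path^k$ values from Lemma \ref{coef_values} and invoking Proposition \ref{approximating_prop}. The remaining work is elementary polynomial algebra; the main obstacle is simply the bookkeeping required to extract the explicit coefficients matching $\polyb_2,\polyb_3,\tildq_3$ from the constant-on-$S$ condition.
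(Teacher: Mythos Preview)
Your proposal is correct and takes essentially the same approach as the paper: both use Lemma \ref{coef_values} to reduce $\sigma(\poly)^2$ to the variance of a single-site polynomial $h(X_0)=a_3X_0^3+a_2X_0^2+(a_1+6da_3)X_0$ (the paper writes this as $a_3T_3+a_2T_2+a_1T_1$ in an auxiliary variable $Z$), and then determine when $h$ is constant on $\supp(\textnormal{d}\rho)$. The only cosmetic differences are that the paper matches covariances against the formula in Proposition \ref{poly_CLT} rather than invoking the i.i.d.\ CLT directly on $\sum_{i}h(X_i)$, and phrases the case analysis in quotient-ring language $\R[Z]/\bigl(\prod_{s\in S}(Z-s)\bigr)$ rather than via explicit factorization of $h-K$.
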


\begin{proof}
Let $Z$ denote both a random variable distributed by $\textnormal{d}\rho$, and the variable in polynomial ring $\R[Z]$. Define
$$T_1=Z,\qquad T_2=Z^2,\qquad T_3=Z^3+6dZ.$$
Using Lemma \ref{coef_values}, we see that for every $k=1,2,3$, we have\\
 $T_k=\sum_{n=1}^3 \path^k(n\delta) Z^n$, thus for every $k,\ell=1,2,3$:
\begin{equation} \nonumber
\begin{split}
\Cov{T_k , T_\ell}=& \sum_{n,m=1,2,3} \path^k(n\delta) \path^\ell(m\delta) \Cov{Z^n, Z^m}\\
& \sum_{\beta,\gamma=\delta,2\delta,3\delta} \path^k(\beta) \path^\ell(\gamma) \Cov{X^{\beta}, X^{\gamma}}\\
& \sum_{\beta,\gamma\in B} \path^k(\beta) \path^\ell(\gamma) \sum_{j\in\Zd} \Cov{X^\beta, X^{\gamma^j}}
\end{split}
\end{equation}
(we may assume w.l.o.g. that $\delta,2\delta,3\delta\in B$).
%Using Lemma \ref{coef_values}, we verify that for every $k,\ell=1,2,3$
%$$\Cov{T_k , T_\ell}=\sum_{\beta,\gamma\in B} \path^k(\beta) \path^\ell(\gamma) \sum_{j\in\Zd} \Cov{X^\beta, X^{\gamma^j}}.$$
%Note that the only relevant multi-indices in the above sum are $\delta, 2\delta, 3\delta$, therefore the only relevant $j\in\Zd$ is $(0,0,\ldots,0)$.
We now deduce from Proposition \ref{poly_CLT} that
$$\sigma(\poly)^2=\Var{a_3 T_3 + a_2 T_2 + a_1 T_1},$$
which is zero iff $\Poly=a_3 T_3 + a_2 T_2 + a_1 T_1$ is almost surely constant, as a random variable. %fixed under possible assignment of a value to $Z$ (that is, fixed under every value which $Z$ takes with non-zero probability).
As a polynomial, $\Poly\in\R[Z]$ has at most $3$ distinct roots, so if $Z$ is supported by more than $3$ points, $\Poly$ is non-constant as a random varaible, thus
%the value of $\Poly$ depends on the value of $Z$ (among values from the support of the underlying distribution), therefore as a random variable
$\Var{\Poly}>0$, proving (1).\\
Observe that any assignment of a value to the random variable $Z$ corresponds to a ring homomorphism $\R[Z]\rightarrow\R$. Furthermore, if we only assign values from $\Set{a,b,c}$, all three assignment homomorphisms factor through the quotient ring $\R[Z]/\Par{(Z-a)(Z-b)(Z-c)}$. Write $P\equiv Q$ for two polynomials $P,Q$, if they have the same projection in the quotient. Note that $P\equiv Q$ iff as random variables, $P=Q$ almost surely. Clearly $\Var{\Poly}=0$ as a random variable iff $\Poly\equiv \text{const}$ in $\R[Z]$. Now write
\begin{equation} \nonumber
	\begin{split}
(Z-a)(Z-b)(Z-c)&=Z^3-(a+b+c)Z^2+(ab+ac+bc)Z-abc\\
&=T_3-(a+b+c)T_2+(ab+ac+bc-6d)T_1-abc,
	\end{split}
\end{equation}
and deduce (2): the polynomial $\tildq_3(x)$ has $\sigma(\tildq_3)=0$ from the above, therefore $\sigma(a_3\tildq_3 + a_0)=0$. If $\poly\neq a_3\tildq_3+a_0$, we see that $\Poly=a_3T_3+a_2T_2+a_1T_2$ is equivalent to a polynomial of degree $1$ or $2$ in $\R[Z]$, and therefore isn't fixed under assignments from $\Set{a,b,c}$.\\
Finally, if $\textnormal{d}\rho$ is supported on $\Set{a,b}$, the same arguments hold with a different quotient ring, $\R[Z]/\Par{(Z-a)(Z-b)}$. Now note that
$$0\equiv (Z-a)(Z-b)=Z^2-(a+b)Z+ab,$$
therefore
$$T_2-(a+b)T_1=Z^2- (a+b)Z\equiv \text{const},$$
proving $\sigma(\polyb_2)^2=0$. We also have
$$Z^3\equiv (a+b)Z^2-abZ\equiv (a^2+ab+b^2)Z-ab(a+b),$$
therefore
$$T_3-(a^2+ab+b^2+6d)T_1=Z_3-\Par{a^2+ab+b^2}Z \equiv \text{const},$$
proving $\sigma(\polyb_3)^2=0$. If $\poly\neq a_3 \polyb_3 + a_2 \polyb_2 + a_0$, then the above computations show that $\Poly$ is equivalent to a polynomial of degree $1$, which isn't equivalent to any constant, therefore $\Var{\Poly}>0$.
\end{proof}

\begin{propos} \label{deg_5_var}
Let $\poly(x)=\sum_{k=0}^5 a_k x^k\in\R[x]$ be a polynomial of degree $5$. Then:
\begin{enumerate}
\item If the underlying distribution $(\textnormal{d}\rho)$ is supported by more than two values, then $\sigma(\poly)^2>0$.
\item If the underlying distribution is supported by exactly two values, denoted $a,b\in\R$, then $\sigma(\poly)^2=\sigma(\poly-a_5 \polyb_5)^2$, where
\begin{equation} \nonumber
	\begin{split}
&
\text{\scalebox{0.92}
{$2\polyb_5(x)=2x^5-5(a+b)x^4+$}}\\
&
\text{\scalebox{0.92}
{$\left[3(a^4+b^4)+8(a^3 b+a^2 b^2+a b^3)+20d(a^2+b^2)+100dab-120d^2+60d\right]x$}}.
	\end{split}
\end{equation}
In particular, $\sigma(\polyb_5)^2=0$.
\end{enumerate}
\end{propos}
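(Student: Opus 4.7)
The plan is to follow the strategy of Proposition \ref{deg_23_var} while adapting it to the multi-site multi-indices $\delta+\delta^{e_v}$ and $2\delta+\delta^{\pm e_v}$ which enter at $k=4,5$ through Lemma \ref{coef_values}. From Proposition \ref{poly_CLT}, $\sigma(\poly)^2$ is a positive semidefinite quadratic form in the expanded coefficients $v_\beta=\sum_{k=1}^5 a_k\, p^k(\beta)$, summed over the finite list of $\beta\in B$ supplied by Lemma \ref{coef_values}.

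For part (2), assume $\textnormal{d}\rho$ is supported on $\{a,b\}$, so that $X_n^2=(a+b)X_n-ab$ almost surely. Iterating gives $X_n^k=c_k X_n+d_k$ with $c_k=(a^k-b^k)/(a-b)$, hence $S_{k\delta}=c_k S_\delta+\textnormal{const}$ where $S_\beta=\sum_{i\in\boxL}X^{\beta^i}$. For the two-site multi-indices one checks $S_{\delta+\delta^{-e_v}}=S_{\delta+\delta^{e_v}}$ in the bulk and $S_{2\delta+\delta^{\pm e_v}}=(a+b)S_{\delta+\delta^{e_v}}-ab\,S_\delta+\textnormal{const}$. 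Substituting, the approximating sum $\sum_{\beta}v_\beta S_\beta$ reduces, up to $O(L^{d-1})$ boundary terms, to $\tilde A\cdot S_\delta+\sum_v \tilde B_v\cdot S_{\delta+\delta^{e_v}}+\textnormal{const}$, where $\tilde A$ and $\tilde B_v$ are explicit linear forms in $a_1,\ldots,a_5$. Using Lemma \ref{basic_cov_lem} and the vanishing of $\E X_n$, the variables $S_\delta$ and the $S_{\delta+\delta^{e_v}}$ are pairwise asymptotically uncorrelated, so the limiting variance diagonalizes:
\[
\sigma(\poly)^2 \;=\; m_2\,\tilde A^2 \;+\; m_2^2\sum_{v=1}^d \tilde B_v^2.
\]
Since $m_2>0$, this vanishes iff $\tilde A=0$ and $\tilde B_v=0$ for every $v$; by symmetry under permutation of coordinate directions the $\tilde B_v$'s all yield the same condition, so we have two linear constraints on the six-dimensional space of polynomials of degree $\leq 5$. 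The null space is therefore four-dimensional; it already contains $1,\polyb_2,\polyb_3$ from Proposition \ref{deg_23_var}, and normalizing $a_5=1$ and $a_2=a_3=0$ then solving the two equations for $a_4$ and $a_1$ produces the fourth generator $\polyb_5$ and, in particular, shows $\sigma(\polyb_5)^2=0$. Because $\sigma(\cdot)^2$ is positive semidefinite, Cauchy--Schwarz ensures any null-space element is orthogonal (in the associated bilinear form) to every polynomial, whence $\sigma(\poly)^2=\sigma(\poly-a_5\polyb_5)^2$.

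For part (1), if $\textnormal{d}\rho$ has support of size $\geq 3$ then $X_n^2$ is not $\R$-affine in $X_n$: in the Gram--Schmidt orthonormalization $h_0=1,h_1,h_2,\ldots$ of $L^2(\textnormal{d}\rho)$ one has $h_2\not\equiv 0$. For each direction $v$ and each site $n$, the level-three chaos basis element $h_2(X_n)h_1(X_{n+e_v})$ picks up a contribution to $\sum_\beta v_\beta\sum_i X^{\beta^i}$ only from the multi-site monomial $X_n^2 X_{n+e_v}$ carried by $\beta=2\delta+\delta^{e_v}$, with coefficient a nonzero multiple of $v_{2\delta+\delta^{e_v}}=5a_5$; indeed, no $\beta\in B$ with $p^k(\beta)\neq 0$ for $k\leq 4$ supports both a squared factor at a lattice site and a linear factor at a nearest neighbour. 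Because all these product chaoses are orthogonal in $L^2$ of the underlying product measure, they contribute additively to the limit variance and yield $\sigma(\poly)^2\geq C\,a_5^2$ for a constant $C>0$ depending on $d$, $m_2$, and $\sVar{h_2(X_0)}>0$. For any polynomial of degree exactly $5$ this is strictly positive.

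The main obstacle I anticipate is the coefficient bookkeeping in part (2): carefully tracking every reduction, verifying the bulk identity $S_{\delta+\delta^{-e_v}}=S_{\delta+\delta^{e_v}}$, and solving the $2\times 2$ linear system to match the explicit formula claimed for $\polyb_5$. Part (1) is conceptually brief once the chaos-decomposition viewpoint is in hand, but it requires the short combinatorial check that no $\beta$ with $p^k(\beta)\neq 0$ for $k\leq 4$ can contribute to the level-three basis element singled out above.
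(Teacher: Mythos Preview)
Your proposal is correct and would yield a complete proof, but it follows a genuinely different route from the paper's own argument.

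For part~(1), the paper constructs auxiliary random variables $W_1,\ldots,W_5$ built from $3^d$ independent copies $\{Z_n\}_{n\in\Lambda_1^d}$ of $\textnormal{d}\rho$, engineered so that $\Cov{W_k,W_\ell}$ reproduces the limiting covariance structure exactly; it then argues in the quotient ring $R/I$ with $I=\bigl(\prod_{a\in\supp(\textnormal{d}\rho)}(Z_n-a)\bigr)_n$, and reaches a contradiction by applying a substitution homomorphism $\psi_a$ (set every $Z_n$ with $n\neq0$ equal to $a$) and reading off the $x^2$-coefficient. Your chaos-decomposition approach is more direct and more probabilistic: projecting onto the tensor basis element $h_2(X_n)h_1(X_{n+e_v})$ isolates a contribution to the variance that is visibly a positive multiple of $a_5^2$, with no auxiliary construction needed. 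The combinatorial check you flag (that no $\beta$ with $p^k(\beta)\neq0$ for $k\le4$ carries a squared factor together with an adjacent linear factor) is immediate from the table in Lemma~\ref{coef_values}. What the paper's framework buys is a single algebraic setting that handles parts~(1) and~(2) uniformly; what your approach buys is a transparent lower bound on $\sigma(\poly)^2$ without passing through an ideal-membership argument.

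For part~(2), both proofs reduce via the almost-sure identity $X_n^2=(a+b)X_n-ab$. The paper carries this out inside the $W_k$ framework; you carry it out directly on the sums $S_\beta$, then use the easy fact (from $\E X_n=0$) that $S_\delta$ and the $S_{\delta+\delta^{e_v}}$ are asymptotically pairwise uncorrelated to diagonalize $\sigma(\poly)^2$. Your final step---deducing $\sigma(\poly)^2=\sigma(\poly-a_5\polyb_5)^2$ from $\sigma(\polyb_5)^2=0$ via Cauchy--Schwarz on the positive-semidefinite limiting bilinear form---is slightly slicker than the paper's explicit computation. The bookkeeping you anticipate is real but routine: $\tilde B_v=4a_4+10(a+b)a_5$, so $\tilde B_v=0$ gives $a_4=-\tfrac{5}{2}(a+b)a_5$, matching the $x^4$-coefficient of $\polyb_5$, and the remaining equation $\tilde A=0$ then pins down the $x$-coefficient.
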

%Note that $p-a_5 q_5$ is a polynomial of degree $\leq 4$, so along with Propositions \ref{deg_good_var} and \ref{deg_23_var}, we obtain the full classification of polynomials $p$ for which $\sigma(p)^2>0$, outlined in Theorem \ref{main_thm}.
\begin{proof}
For every $n\in\Lambda_1^d$, let $Z_n=X_n$. We regard the variables $\Set{Z_n}$ both as $3^d$ independent random variables distributed by $\textnormal{d}\rho$, and as the variables in the polynomial ring
%Let $\Set{Z_n}_{n\in \Lambda_1^d}$ be $3^d$ independent random variables, given by $Z_n=3^{-d/2} X_n$. We also consider each $Z_n$ as a variable in the polynomial ring
$R=\R\SPar{Z_n\ \middle|\ n\in\Lambda_1^d}$. Define:
$$W_1=3^{-d/2}\sum_{n\in\Lambda_1^d} Z_n,\qquad W_2=3^{-d/2}\sum_{n\in\Lambda_1^d} Z_n^2.$$
and
$$W_3=3^{-d/2}\sum_{n\in\Lambda_1^d} \Par{Z_n^3+6dZ_n}.$$
Let $E$ consist of all unordered pairs $\Set{n,m}$, such that $n,m\in\Lambda_1^d$ differ in exactly one coordinate, that is
$$E=\SetPred{\Set{n,m}}{n,m\in\Lambda_1^d\ ,\ \#\SetPred{1\leq v\leq d}{n_v\neq m_v}=1}.$$
Now define
$$W_4=3^{-d/2}\sum_{n\in\Lambda_1^d} \Par{Z_n^4+8dZ_n^2}+3^{-d/2}\sum_{\Set{n,m}\in E} 4Z_nZ_m$$
and
%$$W_5=\sum_{n\in\Lambda_1^d} \Par{Z_n^5+10dZ_n^3+\Par{60d^2-30d}Z_n}+\sum_{\Set{n,m}\in J} 5\Par{Z_n^2 Z_m+Z_nZ_m^2}.$$
\begin{equation} \nonumber
\begin{split}
W_5=&\ 3^{-d/2}\sum_{n\in\Lambda_1^d} \Par{Z_n^5+10dZ_n^3+\Par{60d^2-30d}Z_n}+\\
&\ 3^{-d/2}\sum_{\Set{n,m}\in E} 5\Par{Z_n^2 Z_m+Z_nZ_m^2}.
\end{split}
\end{equation}
Following Lemma \ref{coef_values} and a straightforward computation that we omit, we verify that
$$\Cov{W_k,W_\ell}=\sum_{\beta\in B}\path^k(\beta)\path^\ell(\gamma)\sum_{j\in\Zd}\Cov{X^\beta,X^{\gamma^j}}$$
for every $k,\ell\in\Set{1,2,\ldots,5}$ then deduce from Proposition \ref{poly_CLT} that% for any $P(X)=\sum_{k=0}^5 a_k X^k$, we have
$$\sigma(\poly)^2=\Var{\sum_{k=1}^5 a_k W_k}.$$
Denote
\begin{equation} \label{W_def}
\Poly=\sum_{k=1}^5 a_k W_k.
\end{equation}
As in the proof of Proposition \ref{deg_23_var}, we note that $\Var{\Poly}=0$ iff $\Poly$ is almost surely constant as a random variable.
%is fixed under every possible assignment of values to the variables $Z_n$ (each $Z_n$ must take a value from the support of the underlying distribution).\\
This shows that $\Var{\Poly}>0$ if $\textnormal{d}\rho$ is not finitely supported: generally if $\poly\in\R\SPar{x_1,\ldots,x_m}$ is a non-constant multivariate polynomial, and $S$ is a set such that $\poly(s_1,\ldots,s_m)=0$ for every $s_1 ,\ldots,s_m\in S$, then straightforward induction on $m$ shows that $\abs{S}\leq \deg(\poly)$.\\
So assume henceforth that the variables $Z_n$ are supported by a finite set $\supp(\textnormal{d}\rho)\subset\R$. Denote $q(x)=\prod_{a\in \supp(\textnormal{d}\rho)}\Par{x-a}$, and let $Q_n=q(Z_n)\in R$, and let $I\subset R$ be the ideal generated by the polynomials $\Set{Q_n}_{n\in\Lambda_1^d}$. Every possible assignment of values to $\Set{Z_n}$ corresponds to a ring homomorphism $R\to\R$. If we only assign values from $\supp(\textnormal{d}\rho)$ the homomorphism factors through the quotient ring $R/I$. Write $P\equiv Q$ for two polynomials $P,Q$, if they have the same projection in the quotient. Note that $P\equiv Q$ in $R$ iff as random variables, $P=Q$ almost surely. Clearly $\Var{\Poly}=0$ as a random variable iff $\Poly\equiv\text{const}$ in $R$.

Next, we denote $\omega_1=\omega_2=\omega_3=0$,
$$\omega_4=3^{-d/2}\sum_{\Set{n,m}\in E} 4Z_nZ_m,$$
and
\begin{equation} \label{omega5_def}
\omega_5=3^{-d/2}\sum_{\Set{n,m}\in E} 5\Par{Z_n^2 Z_m+Z_n Z_m^2}
\end{equation}
(so each $\omega_k$ is the part of $W_k$ which is a sum of products involving more than one variable). Now rewrite (\ref{W_def}) as
$$\Poly=a_5\omega_5+a_4\omega_4+\sum_{k=1}^5 a_k\Par{W_k-\omega_k},$$
and note that if $\abs{\supp(\textnormal{d}\rho)}\leq k$ then $W_k-\omega_k$ is equivalent to a polynomial of degree lower than $\abs{\supp(\textnormal{d}\rho)}$: every term of the form $Z_n^k$ is equivalent to $Z_n^k-Z_n^{k-\abs{\supp(\textnormal{d}\rho)}}q\Par{Z_n}$, with degree strictly lower than $k$. Thus
$$\sum_{n\in\Lambda_1}Z_n^k\equiv \sum_{n\in\Lambda_1^d} Z_n^k-Z_n^{k-\abs{\supp(\textnormal{d}\rho)}}q\Par{Z_n},$$
and summing over $n\in\Lambda_1^d$ allows us to reduce $W_k-\omega_k$ to an equivalent combination of $W_1-\omega_1,\ldots,W_{k-1}-\omega_{k-1}$, to eventually obtain
\begin{equation} \label{red_deg}
\Poly\equiv\widetilde \Poly=a_5\omega_5+a_4\omega_4+\sum_{k=1}^{\abs{\supp(\textnormal{d}\rho)}-1} \widetilde a_k\Par{W_k-\omega_k}
\end{equation}
for some $\widetilde a_1,\ldots,\widetilde a_{\abs{\supp(\textnormal{d}\rho)}-1}\in\R$.
We are now ready to prove that $\sVar{\widetilde \Poly}>0$, whenever $\abs{\supp(\textnormal{d}\rho)}\geq 3$. Otherwise, $\sVar{\widetilde \Poly}=0$ implies that $\widetilde \Poly-c\in I$ for some constant $c$, so we can find polynomials $H_n\in R$, such that
\begin{equation} \label{var0_eq1}
\widetilde \Poly - c=\sum_{n\in\Lambda_1^d} H_n\cdot Q_n
\end{equation}
in $R$. Fix some $a\in A$, and let $\psi_a:R\to\R[x]$ be the ring homomorphism, defined by
$$\psi_a(Z_n)=\begin{cases} x & n=0 \\ a & n\neq 0
\end{cases}\ .$$
We have $\psi_a(Q_n)=q(a)=0$ for every $n\neq0$, so when we apply $\psi_a$ to (\ref{var0_eq1}), we obtain the equality
\begin{equation} \label{var0_eq2}
\psi_a(\widetilde \Poly)-c=h(x)q(x)
\end{equation}
in $\R[x]$, where $h(x)=\psi_a(H_0)$.
Note that $W_k-\omega_k$ has degree $k$ in $R$, therefore $\psi_a\Par{W_k-\omega_k}$ has degree at most $k$. Clearly $\psi_a\Par{a_5\omega_5+a_4\omega_4}$ has degree $2$, so from (\ref{red_deg}) the polynomial in the left hand side of (\ref{var0_eq2}) has degree strictly less than $\abs{\supp(\textnormal{d}\rho)}$. But $q(x)$ has degree $\abs{\supp(\textnormal{d}\rho)}$, so we must have $h(x)=0$ (otherwise the right hand side of (\ref{var0_eq2}) would have degree $\abs{\supp(\textnormal{d}\rho)}$ or higher). We deduce that $\psi_a(\widetilde \Poly)-c=0$ as a polynomial in $\R[x]$.

Since for every $n\in\Lambda_1^d$ there are $\#\SetPred{m}{\Set{n,m}\in E}=2d$ values of $m$ for which $5\cdot Z_n^2 Z_m$ appears in the sum (\ref{omega5_def}), the coefficient of $x^2$ in $\psi_a\Par{\omega_5}$ is $2d\cdot 3^{-d/2}\cdot5a$. We deduce that the coefficient of $x^2$ in $\psi_a(\widetilde \Poly)-c=0$ is
$$a_5\cdot 10d\cdot 3^{-d/2}\cdot a + c'=0,$$
where $c'$ doesn't depend on our choice of $a\in \supp(\textnormal{d}\rho)$. Since $a_5\neq 0$, there is at most one $a\in\R$ satisfying the above equation. However, for any $b\in \supp(\textnormal{d}\rho)$, applying $\psi_b$ to (\ref{var0_eq1}) allows us to obtain $a_5\cdot 10d\cdot 3^{-d/2}\cdot b + c'=0$, which is a contradiction. This concludes the proof of (1).\\
If $\supp(\textnormal{d}\rho)=\Set{a,b}$ then $q\Par{Z_n}=\Par{Z_n-a}\Par{Z_n-b}\in I$, therefore
\begin{equation} \label{deg_2_red}
Z_n^2\equiv\Par{a+b}Z_n-ab
\end{equation}
for every $n\in\Lambda_1^d$, thus (\ref{omega5_def}) becomes $\omega_5\equiv\frac52\Par{a+b}\omega_4-20dab W_1$, which allows us to deduce
\begin{equation} \label{mixed_red}
a_5\omega_5+a_4\omega_4\equiv-20a_5dab W_1
\end{equation}
whenever $a_4=-\frac52 \Par{a+b}a_5$.\\
Finally, from (\ref{deg_2_red}) we verify:
\begin{equation} \label{pure_red}
\begin{split}
Z_n^2\equiv& \Par{a+b}Z_n-ab\\
Z_n^3\equiv& \Par{a^2+ab+b^2}Z_n-ab\Par{a+b}\\
Z_n^4\equiv& \Par{a^3+a^2 b+a b^2+b^3}Z_n-ab\Par{a^2+ab+b^2}\\
Z_n^5\equiv& \Par{a^4+a^3 b+a^2 b^2+a b^3+b^4}Z_n-\text{const}.
\end{split}
\end{equation}
Summing over $n\in\Lambda_1^d$ allows us to reduce $3^{-d/2}\sum_n Z_n^k$ (for $k=2,3,4,5$) to equivalent expressions involving $W_1$ and constants, and along with (\ref{mixed_red}) and the definitions of $W_1,W_4,W_5$ we deduce
\begin{equation} \nonumber
\begin{split}
& 2W_5-5\Par{a+b}W_4+\text{const}\equiv\\
& \text{\scalebox{0.94}{$\Par{-3a^4-8a^3 b-8a^2 b^2-8a b^3-3b^4-20da^2-100dab-20db^2+120d^2-60d}W_1$}}.
\end{split}
\end{equation}
From here it follows that $\sigma\Par{\polyb_5}^2=0$ and that $\sigma\Par{\poly}^2=\sigma\Par{\poly-c\polyb_5}^2$ for any polynomial $\Poly$ and constant $c$, concluding our proof.
\end{proof}

\begin{proof}[Proof of Theorem \ref{main_thm}]
%\subsection{Proof of Theorem \ref{main_thm}} \label{main_thm_proof}
Given a polynomial $\poly(x)=\sum_{k=0}^m a_k x^k \in\R[x]$, we have
$$\frac{\Tr{\poly\Par{H_L}}-\Ex{\Tr{\poly\Par{H_L}}}}{(2L+1)^{d/2}} \overset{d}{\longrightarrow} N(0,\sigma(\poly)^2)$$
for $\sigma(\poly)^2\in[0,\infty)$ as $L\rightarrow\infty$, from Proposition \ref{poly_CLT}. From Propositions \ref{deg_good_var} and \ref{deg_23_var} we determine the cases in which $\sigma(\poly)^2>0$ whenever $\deg(\poly)\neq 5$. Finally, if $\deg(\poly)=5$, we know from proposition \ref{deg_5_var} that $\sigma(\poly)^2=\sigma(\poly-a_5 \polyb_5)^2$. If $\poly-a_5\polyb_5$ is non-constant and $\deg(\poly-a_5\polyb_5)$ is $1$ or $4$, we determine that $\sigma(\poly)^2=\sigma(\poly-a_5 \polyb_5)^2>0$ from proposition \ref{deg_good_var}, otherwise we use proposition \ref{deg_23_var} to determine the positivity.
\end{proof}
%\begin{flushright}$\qed$\end{flushright}

%%%%%%%%%%%%%%%%%%%SECTION 5%%%%%%%%%%%%%%%%%%%%%%%%%
\section{Appendix - Proof of Theorem \ref{main_CLT}}
In the setting of Theorem \ref{main_CLT}, we consider a $d$-dimensional array of weakly dependent random variables. %[BACKGROUND of CLT's for dependent variables???].
Explicitly, we prove a central limit theorem which is valid in the setting of $m$-dependent random variables, which we now define:

\begin{defin} \label{m_dep_d1}
Let $\Set{Y_i}_{i\in\Zd}$ be a sequence of random variables. We say that the sequence is $m$-dependent, if for any two finite sets of indices, $I,J\subset\Zd$ which satisfy $\abs{i-j}>m$ for every $i\in I$ and $j\in J$, the corresponding sets of random variables,
$$\Set{Y_i}_{i\in I}, \qquad \Set{Y_j}_{j\in J} $$
are independent.
\end{defin}

Note that this definition extends a notion of $m$-dependence from \cite{HR} defined for sequences of variables indexed by $\N$ (the definition of $m$-dependence in \cite{HR} is equivalent to $m$-dependence as defined above, when we take $d=1$ and $Y_i=0$ for every $i\notin\N$). In \cite{HR}, Hoeffding and Robbins proved the following central limit theorem:

\begin{thm}[Hoeffding-Robbins] \label{thm_HR}
Let $\Set{X_i}_{i\in\N}$ be an $m$-dependent\\
sequence of random variables satisfying
$\Ex{X_i}=0$ and $\Ex{\abs{X_i}^3}\leq R^3 <\infty$ for every $i\in\N$, and
$$\lim_{p\to\infty} p^{-1}\sum_{h=1}^p A_{i+h}=A$$
uniformly for all $i\in\mathbb N$, where
$$A_i=\Ex{X_{i+m}^2}+2\sum_{j=1}^m \Ex{X_{i+m-j}X_{i+m}}.$$
Then
$$\frac{X_1+\ldots+X_n}{n^\frac12} \overset{d}{\longrightarrow} N\Par{0,A}.$$
\end{thm}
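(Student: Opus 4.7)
The plan is to prove the Hoeffding-Robbins CLT by the classical \emph{Bernstein blocking method}: decompose $S_n=X_1+\cdots+X_n$ into alternating ``big'' and ``small'' blocks, exploit $m$-dependence to make the big-block partial sums mutually independent, show that the small-block contribution is negligible, and apply the Lindeberg-Lyapunov CLT to the independent big-block sums.

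\textbf{Step 1 (variance asymptotics).} Using $m$-dependence to kill covariances at lag exceeding $m$, expand
\[
\Var{S_n}=\sum_{i=1}^n \Ex{X_i^2}+2\sum_{\substack{1\le i<j\le n\\ j-i\le m}}\Ex{X_i X_j}.
\]
Rewrite the right-hand side in terms of the quantities $A_i$ (up to $O(m R^2)$ boundary terms) and invoke the assumed uniform Ces\`aro convergence $p^{-1}\sum_{h=1}^p A_{i+h}\to A$ to conclude $\Var{S_n}/n\to A$.

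\textbf{Step 2 (blocking).} Fix a scale $p=p_n\to\infty$ with $p=o(\sqrt{n})$ and $p\ge 2m$. Partition $\Set{1,\ldots,n}$ into alternating ``big'' blocks of length $p$ and ``small'' buffer blocks of length $m+1$, and let $U_k$, $V_k$ denote the partial sums over the $k$-th big and small block respectively. By $m$-dependence, the $U_k$ are mutually independent. For each small block, $\Var{V_k}=O(m^2 R^2)$ by Cauchy-Schwarz on the $O(m^2)$ covariance terms; summing over the $O(n/p)$ blocks yields $\sum_k\Var{V_k}=O(nm^2/p)=o(n)$, so $n^{-1/2}\sum_k V_k\to 0$ in probability by Chebyshev's inequality.

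\textbf{Step 3 (CLT on the big blocks).} Applying Step 1 blockwise, $\Var{U_k}=pA(1+o(1))$ uniformly in $k$, so $\sum_k\Var{U_k}=nA(1+o(1))$. The main obstacle will be verifying Lyapunov's condition from the third-moment bound $\Ex{\abs{X_i}^3}\le R^3$: one needs $\Ex{\abs{U_k}^3}=O(p^{3/2})$, which is a Marcinkiewicz-Zygmund-type estimate for $m$-dependent sums. A direct route is to expand $U_k^3=\sum X_iX_jX_\ell$, discard triples in which one index is more than $m$ from the other two (their expectations factor and vanish since $\Ex{X_i}=0$), and apply H\"older's inequality to the surviving $O(pm^2)$ triples; this is the place where the finite range of dependence is essential, as the naive triangle-inequality bound yields only $O(p^3)$. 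Given this estimate, Lyapunov's ratio is
\[
\frac{\sum_k \Ex{\abs{U_k}^3}}{\Par{\sum_k\Var{U_k}}^{3/2}}=O\Par{\frac{(n/p)\,p^{3/2}}{n^{3/2}}}=O\Par{\sqrt{p/n}}\longrightarrow 0,
\]
so Lindeberg-Lyapunov yields $n^{-1/2}\sum_k U_k\overset{d}{\longrightarrow}N(0,A)$, and Slutsky's theorem combined with Step 2 completes the proof.
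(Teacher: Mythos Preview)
The paper does not supply its own proof of this statement: Theorem~\ref{thm_HR} is quoted from \cite{HR} and used as a black box in the base case $d=1$ of Proposition~\ref{prop_aux_CLT}. Your outline is precisely the classical Bernstein blocking argument, which is how Hoeffding and Robbins themselves proceeded, so you are reconstructing the cited source rather than anything in the present paper.

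That said, there is a real gap in Step~3. Lyapunov's condition requires the \emph{absolute} third moment $\Ex{\abs{U_k}^3}$, but expanding $U_k^3=\sum_{i,j,\ell}X_iX_jX_\ell$ and discarding far triples controls only the \emph{signed} moment $\Ex{U_k^3}$; that computation in fact yields $O(pm^2)=O(p)$, not $O(p^{3/2})$, and in any case tells you nothing about $\Ex{\abs{U_k}^3}$ (for iid standard normals already $\Ex{S_p^3}=0$ while $\Ex{\abs{S_p}^3}\asymp p^{3/2}$). The fix is the one you gesture at with the phrase ``Marcinkiewicz--Zygmund-type estimate'': split $U_k=\sum_{r=0}^{m}U_k^{(r)}$ according to residues of the index $\bmod\,(m+1)$, so that each $U_k^{(r)}$ is a sum of $O(p/m)$ genuinely \emph{independent} mean-zero terms. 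Marcinkiewicz--Zygmund then gives $\|U_k^{(r)}\|_3=O\Par{(p/m)^{1/2}R}$, and Minkowski yields $\|U_k\|_3\le\sum_{r}\|U_k^{(r)}\|_3=O\Par{m^{1/2}p^{1/2}R}$, hence $\Ex{\abs{U_k}^3}=O\Par{m^{3/2}p^{3/2}}$. With this correction your Lyapunov ratio computation and the rest of the argument go through as written.
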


Theorem \ref{thm_HR} allows us to deduce a central limit theorem for $d=1$, and the following theorem by Neumann \cite{Neumann} will allow us to prove an induction argument on $d$:

\begin{thm}[Neumann] \label{thm_Neumann}
Suppose that $\Set{X_{n,k}\ \middle|\ n\in\N\ ,\ k=1,2,\ldots,n}$ is a triangular scheme of random variables with $\Ex{X_{n,k}}=0$ and
$$\sum_{k=1}^n \Ex{X^2_{n,k}}\le C$$
for all $n,k$ and some $C<\infty$. We assume that
$$\sigma_n^2 = \Var{X_{n,1}+...+X_{n,n}} \underset{n\rightarrow \infty}{\longrightarrow}\sigma^2\in [0,\infty), $$
and that
$$\sum_{k=1}^n \Ex{X_{n,k}^2 1(\abs{X_{n,k}}>\epsilon)}\underset{n \rightarrow \infty}{\longrightarrow}0 $$
holds for all $\epsilon>0$. Furthermore, we assume that there exists a summable sequence $(\theta_r)_{r\in \N}$ such that for all $u \in \N$ and all indices
$$1\le s_1<s_2<...<s_u<s_u+r=t_1\le t_2 \le n,$$
the following upper bounds for covariances hold true: for all measurable functions $g: \R^u \longrightarrow \R$ with $||g||_{\infty}=\sup_{x\in \R^u} |g(x)|\le 1$, we have
\begin{equation} \label{cov_req_1}
\abs{\Cov{g\Par{X_{n,s_1},...,X_{n,s_u}}X_{n,s_u}\ ,\ X_{n,t_1}}}\le \left(\Ex{X_{n,s_u}^2}+\Ex{X_{n,t_1}^2}+\frac{1}{n}\right)\theta_r
\end{equation}
and
\begin{equation} \label{cov_req_2}
\abs{\Cov{g\Par{X_{n,s_1},...,X_{n,s_u}}\ ,\ X_{n,t_1} X_{n,t_2}}}\le \left(\Ex{X_{n,t_1}^2}+\Ex{X_{n,t_2}^2}+\frac{1}{n}\right)\theta_r.
\end{equation}
Then
$$X_{n,1}+...+X_{n,n}\overset{d}{\longrightarrow} N\Par{0,\sigma^2}$$
as $n\rightarrow\infty$.
\end{thm}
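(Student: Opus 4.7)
The plan is to follow Neumann's approach in \cite{Neumann} via the method of characteristic functions. Set $S_n := \sum_{k=1}^n X_{n,k}$ and $\phi_n(t) := \Ex{e^{itS_n}}$. The target is to derive the asymptotic differential identity
$$\phi_n'(t) + t\sigma_n^2 \phi_n(t) = o(1) \quad \text{uniformly on compact } t\text{-intervals},$$
then use $\sigma_n^2 \to \sigma^2$, $\phi_n(0) = 1$, and a Gronwall-type ODE stability argument to conclude $\phi_n(t) \to e^{-\sigma^2 t^2/2}$. L\'evy's continuity theorem then yields weak convergence to $N(0, \sigma^2)$.

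After a Lindeberg truncation, which controls third-order Taylor remainders via $\varepsilon \cdot \sum_k \Ex{X_{n,k}^2}$, the derivative becomes $\phi_n'(t) = i \sum_k \Ex{X_{n,k} e^{itS_n}}$. For each $k$, write $S_n = T_{k-1} + X_{n,k} + R_k$ with $T_{k-1} = \sum_{j<k} X_{n,j}$ and $R_k = \sum_{j>k} X_{n,j}$. The heuristic target is a Stein-type identity $\Ex{X_{n,k} e^{itS_n}} \approx it\,\Ex{X_{n,k} S_n}\phi_n(t)$, whose summation produces $\phi_n'(t) \approx i \cdot it \cdot \Ex{S_n^2} \phi_n(t) = -t\sigma_n^2\phi_n(t)$. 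To extract it, one expands $e^{it T_{k-1}} \approx 1 + it T_{k-1}$ and $e^{it R_k} \approx 1 + it R_k$ through first order (together with the factor $e^{itX_{n,k}}$), producing among the leading contributions the terms $\Ex{X_{n,k} X_{n,j} e^{it T_{k-1}}}$ for $j \neq k$, which upon summation supply the cross-covariance piece $2\sum_{j<k}\Ex{X_{n,j}X_{n,k}}$ in $\sigma_n^2$, while the diagonal Taylor term $it \Ex{X_{n,k}^2 e^{itS_n}}$ supplies $\sum_k \Ex{X_{n,k}^2}$.

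Each error term generated by these expansions is a covariance of one of exactly two shapes: $\Cov{g(T_{k-1}) X_{n,s_u}, X_{n,t_1}}$, bounded by condition (\ref{cov_req_1}), or $\Cov{g(T_{k-1}), X_{n,t_1}X_{n,t_2}}$, bounded by condition (\ref{cov_req_2}). The asymmetric past-to-future structure of both hypotheses is essential, because at every step of the expansion the "slowly varying" factor $e^{itT_{k-1}}$ depends only on past variables while the residual noisy factors involve future variables. Summing the resulting bounds over $k$ and over the relevant future indices, the decay $\theta_r$ in the temporal gap $r$, together with $\sum_r \theta_r < \infty$ and $\sum_k \Ex{X_{n,k}^2} \le C$, keeps the aggregate error at $o(1)$; the $1/n$ term in (\ref{cov_req_1})--(\ref{cov_req_2}) contributes at most $O(\sum_r \theta_r)$ after summation over $k$ and is absorbed.

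I expect the main obstacle to be disentangling the error terms cleanly: each Taylor expansion produces several corrections, and one must separate those that rebuild the main term $\sigma_n^2$ from those that fit (\ref{cov_req_1}) or (\ref{cov_req_2}). In particular, an intermediate expression like $\Ex{X_{n,k} X_{n,j} e^{it T_{k-1}}}$ has to be compared with $\Ex{X_{n,k}X_{n,j}} \phi_n(t)$ rather than the bare moment $\Ex{X_{n,k}X_{n,j}}$, which demands a second-level covariance bound controlling $e^{it(R_k + X_{n,k})}$ against $X_{n,k}X_{n,j}$ — this is exactly where (\ref{cov_req_2}) is invoked a second time. A subsidiary difficulty is ensuring the "slowly varying" factor always has $L^\infty$ norm at most $1$ (so that the bounds apply verbatim), which may require a further truncation layer on the residual oscillatory factor before the covariance hypotheses are applied.
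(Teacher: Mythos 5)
This theorem is cited in the paper from \cite{Neumann} and used as a black box in the proof of Proposition \ref{prop_aux_CLT}; the paper does not supply a proof of it, so there is no in-paper argument to compare your attempt against. You are attempting to re-prove a quoted literature result rather than reproduce a step of this paper's own reasoning.

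With that caveat, your outline does capture the broad skeleton one would expect from a characteristic-function argument: derive the approximate ODE $\phi_n'(t) + t\sigma_n^2\phi_n(t) = o(1)$, apply a stability estimate and $\sigma_n^2 \to \sigma^2$ to get $\phi_n(t) \to e^{-\sigma^2 t^2/2}$, and finish by L\'evy continuity. Your identification of the past/future asymmetry in the two covariance hypotheses and of the role of the Lindeberg condition is also on target. However, the central expansion you propose is not valid as stated: writing $e^{itT_{k-1}} \approx 1 + itT_{k-1}$ and $e^{itR_k} \approx 1 + itR_k$ is a first-order Taylor expansion in quantities $T_{k-1}$ and $R_k$ that are sums of up to $n-1$ terms and are of order $\Theta(1)$ in distribution (indeed $\sVar{T_{k-1}}$ can approach $\sigma^2$), so the remainder has no hope of being $o(1)$ uniformly on compact $t$-intervals. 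The only quantity that becomes small after the Lindeberg truncation is the individual summand $X_{n,k}$, and it is only there that Taylor expansion is legitimate. The far-past factor $e^{itT_{k'}}$ (for a suitable cutoff index $k' < k$) must instead be carried along intact as the bounded function playing the role of $g$ in (\ref{cov_req_1})--(\ref{cov_req_2}); since $\abs{e^{itT_{k'}}}=1$ automatically, the extra truncation layer you mention at the end is unnecessary. The near-window terms between $k'$ and $k$ are then the ones controlled by the summable decay $\theta_r$, and the expansion is done locally in the window rather than globally in $T_{k-1}$ or $R_k$. As written, your sketch would not close; repairing it requires reorganizing the decomposition around a moving window and expanding only the Lindeberg-small pieces.
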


%Note that under the assumptions of Theorem \ref{main_CLT}, the random variables Y_
%We are now ready to state and prove our CLT for $m$-dependent random variables:
Our central limit theorem for $m$-dependent random variables follows:

\begin{propos} \label{prop_aux_CLT}
Let $\Set{Y_i}_{i\in\Zd}$ be an identically distributed $d$-dimensional $m$-dependent array of random variables such that $\Ex{Y_i}=0$, and $\Ex{\abs{Y_i}^3}<\infty$.

    Then
$$\frac1{\Par{2L+1}^{d/2}}\sum_{i\in\boxL} Y_i \overset{d}{\longrightarrow} N\Par{0,\sigma^2},$$
where
$$\sigma^2=\lim_{L\to\infty}\frac1{\Par{2L+1}^d}\Var{\sum_{i\in\boxL}Y_i}.$$
%for some $\sigma^2\in\left[0,\infty \right)$.
%Let $\{X_i\}$ be an identically distributed $d$-dimensional $m$-dependent array of random variables such that $\Ex{X_i}=0$, and there exists $C>0$ such that $|X_i|<C$.
%  Fix $i_{k+1},...i_d$. Denote
%$$E_{n,k}=\{\gamma=\left(\gamma^*,i_{k+1},...,i_d\right)\} \ : \ \gamma^*\subset [-n,n]^k,\}  $$
%Then for every $k\le d$,
%$$\sum_{\gamma \in E_{n,k}} \frac{X_{\gamma}}{n^{\frac{k}{2}}}\overset{d}{\longrightarrow} N(0,\sigma^2)  $$
\end{propos}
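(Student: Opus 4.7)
The plan is to use Theorem \ref{thm_HR} (Hoeffding--Robbins) for $d=1$ and Theorem \ref{thm_Neumann} (Neumann) to reduce any $d \ge 2$ to a one-dimensional problem, as hinted in the text. I present the argument as an induction on $d$.

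Base case $d=1$: by translating the index by $L+1$, the sum $\sum_{i\in\Lambda_L}Y_i$ has the same distribution as $X_1+\cdots+X_{2L+1}$ for an initial segment of an identically distributed $m$-dependent sequence indexed by $\N$. Stationarity makes the quantities $A_i$ appearing in Theorem \ref{thm_HR} constant in $i$, so the uniform-convergence hypothesis is automatic; $\Ex{\abs{Y_i}^3}<\infty$ is given; and Hoeffding--Robbins directly yields the CLT with limiting variance equal to the asymptotic normalized variance.

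Inductive step: writing $i=(i',k)$ with $i'\in\Z^{d-1}$ and $k\in\Z$, slice $\boxL$ along the last coordinate. For $k\in\Lambda_L$ set $S_k^{(L)}=\sum_{i'\in\Lambda_L^{d-1}}Y_{(i',k)}$, and define the triangular scheme $X_{n,j}=S_{j-L-1}^{(L)}/(2L+1)^{d/2}$ for $j=1,\ldots,n=2L+1$, so that $(2L+1)^{-d/2}\sum_{i\in\boxL}Y_i=\sum_{j=1}^n X_{n,j}$. The sequence $\{S_k^{(L)}\}_k$ inherits $m$-dependence in $k$, since variables in slabs separated by more than $m$ in the last coordinate cannot be within $\ell^1$-distance $m$. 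The estimate $\Var{S_k^{(L)}}=O((2L+1)^{d-1})$ (from $m$-dependence and bounded variances) gives $\sum_j \Ex{X_{n,j}^2}=O(1)$; convergence $\sigma_n^2\to\sigma^2$ is precisely our hypothesis; and for the covariance bounds (\ref{cov_req_1})--(\ref{cov_req_2}), the $m$-dependence of the slabs makes both covariances vanish whenever $r>m$, while for $r\le m$ a routine Cauchy--Schwarz bound combined with $\|g\|_\infty\le 1$ and AM--GM yields inequalities of the form $C\bigl(\Ex{X_{n,s_u}^2}+\Ex{X_{n,t_1}^2}\bigr)$ and $C\bigl(\Ex{X_{n,t_1}^2}+\Ex{X_{n,t_2}^2}\bigr)$ respectively, so one may take $\theta_r=C\cdot\mathbf{1}(r\le m)$, which is summable.

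The main obstacle is verifying the Lindeberg condition. My route is the truncation bound $\Ex{X_{n,j}^2\,\mathbf{1}(\abs{X_{n,j}}>\epsilon)}\le\Ex{\abs{X_{n,j}}^3}/\epsilon$, which reduces to the $L^3$ estimate $\Ex{\abs{S_k^{(L)}}^3}=O\bigl((2L+1)^{3(d-1)/2}\bigr)$. This estimate is \emph{not} delivered by the inductive hypothesis (which only yields distributional convergence), so it must be obtained directly from the $m$-dependence structure --- for instance by partitioning $\Lambda_L^{d-1}$ into $(2m+1)^{d-1}$ sublattices on which the variables $\{Y_{(i',k)}\}_{i'}$ are mutually independent, applying a Rosenthal or Marcinkiewicz--Zygmund inequality on each sublattice sum, and recombining via Minkowski. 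Summing over $j$ then gives $\sum_j\Ex{\abs{X_{n,j}}^3}=O\bigl((2L+1)^{-1/2}\bigr)\to 0$, completing the verification of Neumann's hypotheses and hence the induction.
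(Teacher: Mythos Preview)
Your approach is correct and follows the same overall architecture as the paper's proof: Hoeffding--Robbins for $d=1$, then Neumann's triangular-array CLT applied to one-coordinate slices for the step from $d$ to $d+1$. The substantive difference lies in how the Lindeberg condition is verified. The paper genuinely uses the induction hypothesis at this point: knowing that $\sqrt n\, Z_{n,j}\overset{d}{\to}N(0,\sigma_d^2)$ for the $d$-dimensional slices, it argues via a fixed-level truncation $\{|\sqrt n\,Z_{n,j}|>M\}$ that the tail second moments are small. Your route instead establishes the Lyapunov condition $\sum_j\Ex{|X_{n,j}|^3}\to 0$ directly from the $m$-dependence structure (sublattice decomposition into independent families, then Rosenthal and Minkowski). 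Since this makes no use of the inductive hypothesis, your argument is really ``Hoeffding--Robbins for $d=1$, Neumann directly for each $d\ge 2$'' rather than a genuine induction. What you gain is a more self-contained proof that also sidesteps a point the paper leaves implicit: its truncation step tacitly needs uniform integrability of $\bigl(\sqrt n\, Z_{n,j}\bigr)^2$ to pass to the limit in the truncated variance, and that uniform integrability is exactly what your third-moment bound supplies. What the paper's route buys is that it avoids importing Rosenthal-type inequalities.

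One small wording issue: for the covariance bound \eqref{cov_req_2}, applying Cauchy--Schwarz to $\Cov{g,\,X_{n,t_1}X_{n,t_2}}$ lands you on $\sqrt{\Var{X_{n,t_1}X_{n,t_2}}}$, which involves fourth moments. The bound you actually want is the more elementary $|\Cov{g,\,X_{n,t_1}X_{n,t_2}}|\le 2\,\Ex{|X_{n,t_1}X_{n,t_2}|}\le \Ex{X_{n,t_1}^2}+\Ex{X_{n,t_2}^2}$, using $\|g\|_\infty\le 1$ and AM--GM directly; this needs only second moments and gives $\theta_r=1$ for $r\le m$.
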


\begin{proof}
By induction on $d$. For $d=1$, this is a straightforward application of Theorem \ref{thm_HR} to the random variables $\Set{X_i}_{i\in\N}$, defined by $X_i=Y_{i+m}+Y_{-i-m}$ (noting that for $i>m$, $\Set{X_i}_{i\in\N}$ are identically distributed and $m$-dependent, and the exclusion of a finite set of random variables $\Set{Y_i\ :\ \abs{i}\leq m}$ from the sum has no effect on the limit distribution).\\

We now assume by induction that the proposition holds for some $d\in\N$, and prove it in dimension $d+1$. For every $L\in\N$ we denote $n=2L+1$, rewrite
$$\frac1{\Par{2L+1}^{\Par{d+1}/2}}\sum_{i\in\boxLp} Y_i=\sum_{j=-L}^L Z_{n,j},$$
where
$$Z_{n,j}=\frac1{n^{1/2}} \cdot \frac1{n^{d/2}}\sum_{i\in I_{n,j}} Y_i$$
and
\begin{equation} \nonumber
\begin{split}
I_{n,j}&=\boxL\times\Set{j}\\
&=\Set{\Par{i_1 ,\ldots,i_{d+1}}\in\boxLp\ \middle|\ i_{d+1}=j}
\end{split}
\end{equation}
are defined for every $j\in\Lambda_L$. Our proof will be completed by applying Theorem \ref{thm_Neumann} to the random variables
$$X_{n,k}=\begin{cases} Z_{n,k-L-1} & n=2L+1\\
Z_{n+1,k-L-1} & n=2L,\end{cases}$$
which are defined for every $n\in\N$ and $k=1,2,\ldots,n$. We will apply the requirements of the theorem to the corresponding variables $Z_{n,j}$ (we henceforth ignore even values of $n$).\\
Fixing any $j\in\Z$, we may identify $I_{n,j}$ with $\boxL$, and note that the $d$-dimensional array $\Set{Y_i\ \middle|\ i\in\Z^{d+1}\ ,\ i_{d+1}=j}$ is identically distributed and $m$-dependent (the distribution of the array is independent of $j\in\Z$ as well). The induction hypothesis now applies, and we deduce
\begin{equation} \label{ind_hyp}
\sqrt n Z_{n,j}=\frac1{n^{d/2}}\sum_{i\in I_{n,j}}Y_i \overset{d}{\longrightarrow} N\Par{0,\sigma_d^2}
\end{equation}
as $n\rightarrow\infty$, uniformly in $j$, for some $\sigma_d^2\geq0$. The variables $Z_{n,j}$ are ``well behaved", in the sense that for any sufficiently large $n$,
$$\Ex{Z_{n,j}^2}=\Var{Z_{n,j}}\leq\frac1n(\sigma_d^2+1)$$
(thus there exists $C>0$ such that $\Ex{Z_{n,j}^2}\le \frac{C}{n}$ for all $n\in\N$ and $j\in\Lambda_L$). We deduce that
$$\Ex{Z_{n,j}}=0,\qquad  \sum_{j=-L}^L \Ex{Z_{n,j}^2}\le C.$$
Additionally, since the finite sequence $\Set{Z_{n,j}}_{j\in\Lambda_L}$ is both identically distributed and $m$-dependent (for every $n=2L+1\in\N$), one can verify that
$$\Var{\sum_{j=-L}^L Z_{n,j}}\underset{n \rightarrow \infty}{\longrightarrow} \sigma^2<\infty.$$
Next, we prove that
$$\sum_{j=-L}^L \Ex{Z_{n,j}^2 1(\abs{Z_{n,j}}>\epsilon)}\underset{n \rightarrow \infty}{\longrightarrow}0$$
for every $\epsilon>0$. Note that
\begin{equation} \label{var_sum}
\begin{split}
\sum_{j=-L}^L \Ex{Z_{n,j}^2 1\Par{\abs{Z_{n,j}}>\epsilon}}&=n \Ex{Z_{n,j}^2 1\Par{\abs{Z_{n,j}}>\epsilon}}\\
&=\Ex{n \Par{Z_{n,j}}^2 1\Par{\abs{\sqrt n Z_{n,j}}>\epsilon\sqrt n}}.
%&=\Var{\sqrt n Z_{n,j} 1\Par{\abs{\sqrt n Z_{n,j}}>\epsilon\sqrt n}}.
\end{split}
\end{equation}
From the induction hypothesis, we know that $\sqrt{n}Z_{n,j}\overset{d}{\longrightarrow} N(0,\sigma_d^2)$. We deduce that for every $M>0$ we have
\begin{equation} \label{small_conv}
\sqrt{n}Z_{n,j}1(\abs{\sqrt n Z_{n,j}}>M)\overset{d}{\longrightarrow} \Phi_M,
\end{equation}
where $\Phi_M$ is a random variable satisfying $\Ex{\Phi_M}=0$, and
$$\Var{\Phi_M}= \begin{cases}
2\int_M^\infty \frac{t^2}{\sigma_d \sqrt{2\pi}}\exp{\Par{-\frac{t^2}{2\sigma_d^2}}}dt & \sigma_d^2>0\\
0 & \sigma_d^2=0.
\end{cases}$$
Choose some $M>0$ so that $\Var{\Phi_M}$ is arbitrarily close to $0$. For every $\epsilon>0$, any sufficiently large $n\in\N$ satisfies $\epsilon\sqrt n>M$, so
$$1\Par{\abs{\sqrt n Z_{n,j}}>\epsilon\sqrt n}\leq 1\Par{\abs{\sqrt n Z_{n,j}}>M},$$
and (\ref{var_sum}) now becomes
\begin{equation} \nonumber
\begin{split}
\sum_{j=-L}^L \Ex{Z_{n,j}^2 1\Par{\abs{Z_{n,j}}>\epsilon}}&=\Ex{n \Par{Z_{n,j}}^2 1\Par{\abs{\sqrt n Z_{n,j}}>\epsilon\sqrt n}}\\
&\leq\Ex{n \Par{Z_{n,j}}^2 1\Par{\abs{\sqrt n Z_{n,j}}>M}}\\
&=\Var{\sqrt{n}Z_{n,j}1(\abs{\sqrt n Z_{n,j}}>M)}\underset{n \rightarrow \infty}{\longrightarrow}\Var{\Phi_M}
\end{split}
\end{equation}
(due to (\ref{small_conv})).

It remains to show that there exists a summable sequence $(\theta_r)_{r\in \N}$ so that the upper bounds for covariances required in Neumann's Theorem hold (equations (\ref{cov_req_1}) and (\ref{cov_req_2}), for all relevant cases). From the $m$-dependence of the finite sequence $\Set{Z_{n,j}}_{j\in\Lambda_L}$, we deduce that the left hand sides of (\ref{cov_req_1}) and (\ref{cov_req_2}) equal $0$ whenever $r>m$, so we conclude by finding some $\theta_1,\ldots,\theta_m<\infty$. A straightforward computation shows that (\ref{cov_req_1}) holds as long as $\theta_r\geq 1$.
\comment{%%%%%%% Detailed proof for cov_req_1
Note that we always have
$$\Var{g\Par{Z_{n,s_1},\ldots,Z_{n,s_u}}Z_{n,s_u}}\leq\Ex{g\Par{Z_{n,s_1},\ldots,Z_{n,s_u}}^2 Z_{n,s_u}^2}\leq \Ex{Z_{n,s_u}^2}$$
(since $||g||_{\infty}\le 1$), therefore
\begin{equation} \nonumber
\begin{split}
& \abs{\Cov{g\Par{Z_{n,s_1},...,Z_{n,s_u}}Z_{n,s_u}\ ,\ Z_{n,t_1}}}\leq \\
& \sqrt{\Var{g\Par{Z_{n,s_1},...,Z_{n,s_u}}Z_{n,s_u}}\Var{Z_{n,t_1}}} \leq \sqrt{\Ex{Z_{n,s_u}^2}\Ex{Z_{n,t_1}^2}} =\Ex{Z_{n,s_u}^2}\\
\end{split}
\end{equation}
since the distribution of $Z_{n,j}$ is independent of $j$. This tells us that (\ref{cov_req_1}) holds as long as $\theta_1,\ldots,\theta_m\geq 1$.}%%%%%%%
To prove (\ref{cov_req_2}), we use
$$\Var{g\Par{Z_{n,s_1},\ldots,Z_{n,s_u}}}\leq\Ex{g\Par{Z_{n,s_1},\ldots,Z_{n,s_u}}^2}\leq 1$$
(as $||g||_{\infty}\le 1$) to obtain
\begin{equation} \nonumber
\begin{split}
& \abs{\Cov{g\Par{Z_{n,s_1},...,Z_{n,s_u}}\ ,\ Z_{n,t_1}Z_{n,t_2}}}\leq \\
& \sqrt{\Var{g\Par{Z_{n,s_1},...,Z_{n,s_u}}}\Var{Z_{n,t_1}Z_{n,t_2}}} \leq \sqrt{\Var{Z_{n,t_1}Z_{n,t_2}}},
\end{split}
\end{equation}
and we conclude by showing that for some $\theta<\infty$,
$$\sqrt{\Var{Z_{n,t_1}Z_{n,t_2}}}\leq\frac 1n \theta$$
holds for every $n=2L+1$ and $t_1,t_2\in\Lambda_L$. Equivalently, we will show that
$$\sup_{n,t_1,t_2}\Var{\sqrt n Z_{n,t_1}\cdot \sqrt n Z_{n,t_2}}<\infty.$$
From (\ref{ind_hyp}) we deduce that
$$\sup_n\Var{\sqrt n Z_{n,t_1}\cdot \sqrt n Z_{n,t_2}}<\infty$$
for every $t_1,t_2\in\Z$. Furthermore, since our initial variables $\Set{Y_i}_{i\in\Zd}$ are identically distributed, the value of $\Var{\sqrt n Z_{n,t_1}\cdot \sqrt n Z_{n,t_2}}$ depends only on $n$ and $t_2-t_1$, and since our variables are $m$-dependent, it is enough to consider $\abs{t_2-t_1}\in\Set{0,1,\ldots,m+1}$. This concludes our proof.
\end{proof}

\begin{proof}[Proof of Theorem \ref{main_CLT}]
%\subsection{Proof of Theorem \ref{main_CLT}} \label{main_CLT_proof}
Theorem \ref{main_CLT} will follow from Proposition \ref{prop_aux_CLT}, applied to the variables
$$Y_i=\sum_{\beta\in B}a_\beta\Par{X^{\beta^i} - \Ex{X^{\beta^i}}}.$$
Clearly the variables $\Set{Y_i}_{i\in\Zd}$ are identically distributed (since $\Set{X_n}_{n\in\Zd}$ are), and $\Ex{Y_i}=0$. Since every $X_n$ has finite moments, so do $Y_i$ (as a finite sum of products of the variables $\Set{X_n}_{n\in\Zd}$). In particular, $\Ex{\abs{Y_i}^3}<\infty$.\\
Since $a_\beta\neq 0$ only for finitely many $\beta\in B$, one can find sufficiently large $m$, such that whenever $\abs{j-i}>m$ and $a_\beta,a_\gamma\neq 0$, the supports of $\beta^i$ and $\gamma^j$ are disjoint. From here it follows that $\Set{Y_i}_{i\in \Zd}$ is $m$-dependent.
%\begin{flushright}$\qed$\end{flushright}
\end{proof}

%%%%%%%%%%%%%%%%%%%REFERENCES%%%%%%%%%%%%%%%%%%%%%%%%

\end{document}